\newif\ifcsl
\string\ref{thm:prAtEnd\pratendcountercurrent})},
\def\input@path{{sections/}{./}}
\begin{document}

\ifcsl

\maketitle
\begin{abstract}
  For a functor $Q$ from a category $\C$ to the category $\Pos$ of
ordered sets and order-preserving functions, we study liftings of
various kind of structures from the base category $\C$ to the total
(or \Gr) category $\iQ$.  That lifting a monoidal structure
corresponds to giving some lax natural transformation making $Q$
almost monoidal, might be part of folklore in category theory.
We rely on and generalize the tools supporting this correspondence so
to provide exact conditions for lifting symmetric monoidal closed and
star-autonomous structures.

A corollary of these characterizations is that, if $Q$ factors as a
monoidal functor through $\SLatt$, the category $\SLatt$ of complete
lattices and \sp functions, then $\iQ$ is always symmetric monoidal
closed. In this case, we also provide a method, based on the double
negation nucleus from quantale theory, to turn $\iQ$ into a
star-autonomous category.
The theory developed, originally motivated from the categories
$\QSet[P]$ of \SdP, yields a wide generalization of \HS construction
of star-autonomous categories by means of orthogonality structures.

\end{abstract}
\else
\makeatletter
\newcommand{\definetitlefootnote}[1]{
  \newcommand\addtitlefootnote{
    \makebox[0pt][l]{$^{*}$}
    \footnote{\protect\@titlefootnotetext}
  }
  \newcommand\@titlefootnotetext{\spaceskip=\z@skip $^{*}$#1}
}
\makeatother

\definetitlefootnote{
  Work supported by the ANR projects LAMBDACOMB ANR-21-CE48-0017 and
  RECIPROG ANR-21-CE48-0019}

\title[Lifting star-autonomous structures]{Lifting star-autonomous structures\addtitlefootnote}
\author[C. De Lacroix]{Cédric De Lacroix}
\email{cedric.delacroix@lis-lab.fr}
\author[G. Chichery]{Gregory Chichery}
\email{gregory.chichery@lis-lab.fr}
\author[L. Santocanale]{Luigi Santocanale}
\email{luigi.santocanale@lis-lab.fr}
\address{
  LIS, CNRS UMR 7020, Aix-Marseille Universit\'e,
  France
}

\maketitle

\begin{abstract}
  
\end{abstract}

\smallskip
\noindent \textbf{Keywords.} 
\Gr construction, total category, op-fibration,
\sautcat, dualizing object, Girard quantale, double
negation nucleus.

\fi

\section{Introduction}

Categorical models of proofs of multiplicative classical linear logic
are \sautcats \cite{Barr1979}.  Many of these categories are built
from a given \sautcat, usually a degenerate one such as the category
of sets and relations, by attaching to each object a structure and by
requiring the maps to be compatible with the structures.  The new
category is usually better behaved, at least, for the semantics of
proofs.  Most often, the collection of all the structures that we can
attach to an object is a poset. This is the case, for example, of \SdP
categories $\QFSet[P]$
\cite{SchalkDePaiva2004} and of \HS orthogonality \cats
\cite{HS2003,Fiore2023}.

From a categorical perspective, attaching to each object a structure
amounts to considering the total category (or \Gr category) $\iQ$ of a
(monoidal) functor $Q : \C \rto \Pos$, where $\C$ is some \SM category
and $\Pos$ is the category of posets and \opr functions. This
construction yields 
the canonical (op-)fibration $\pi : \iQ \rto \C$ which strictly
preserves the monoidal structure.  The theory of monoidal fibrations
was firstly developed in \cite{Shulman2008} and later, from a
different perspective more relevant here, in \cite{MV2020}.  Roughly
speaking, we investigate in this paper variants of the following
questions: given a monoidal functor $Q : \C \rto \Pos$ (so that we
are ensured that the total category $\iQ$ is monoidal), when is $\iQ$
a 
closed \cat, and when is it \saut?

\medskip

More precisely, we give exact answers (and characterizations) to the
following questions: given a monoidal functor $Q : \C \rto \Pos$,
where $\C$ is monoidal with some additional structure, when has $\iQ$
this structure in a such way that the canonical op-fibration $\pi$
strictly preserves it?  We call this the lifting problem for a
structure. The structures considered here are being closed, being
\saut, and having \final coalgebras (and initial algebras) of
functors. Yet the tools developed in this paper can, in principle, be
used to lift other kind of structures, if not all the structures.

When the monoidal functor $Q$ takes values in the category \SLatt of
complete lattices and \sp functions, our characterizations yield
remarkable consequences. In this case, $\iQ$ turns out to always be
closed. Moreover, assuming that $\C$ is \saut with dualizing object
$0$, it is possible to turn $\iQ$ into a \sautcat by choosing an
element $\omega \in Q(0)$, which might be thought of as a sort of
global falsity, and by considering a double negation quotient of
$\iQ$.  While this construction yields a generalization of focused
orthogonality categories \cite{HS2003,Fiore2023}, the way we
discovered it was through the analogy with quantales, that is,
provability models of intuitioninistic linear logic, see
\cite{Rosenthal1990b}.  In order to turn a quantale into a model of
classical linear logic, a so called Girard quantale, see
\cite{Rosenthal1990a}, it suffices to choose a candidate falsity (a
candidate dualizing element) and consider the fixed points of the
double negation nucleus it gives rise.

This research was partly motivated by recent research on the algebraic
and categorical semantics of linear logic with fixed points
\cite{EJ2021,AdJS2022,Farzad2023,Fiore2023} extending to linear logic
previous work by one of authors on the categorical semantics of
fixed-point logics and circular proof systems
\cite{Santocanale2002,Santocanale02a,Santocanale02b,FSCSAL13}.  An
important model of proofs of linear logic with fixed points, that has
been considered in those works, is the category $\Nuts$ of non uniform
totality spaces, which indeed arises as $\iQ$ for a functor
$Q : \Rel \rto \Pos$.  
Besides 
considering models of proofs of linear logic, many other reasons have
triggered us to this research. Persuaded that ordered structures are
pervasive and essential both in logic and computation, we started
investigating Frobenius quantales as Frobenius monoids in the \sautcat
$\SLatt$ and, later, Frobenius monoids in arbitrary
\sautcats. 
We could prove in \cite{delacroix_et_al:LIPIcs.CSL.2023.18} the
equivalence between an object $X$ being nuclear and the monoid
$X\impl X$ being Frobenius under the hypothesis that the tensor unit
embeds into $X$ as a retract. To argue that this hypothesis is
necessary, we built a counterexample by resorting to \SdP categories
$\QSet[P]$ \cite{SchalkDePaiva2004} for a well-chosen Girard quantale
$P$.  These categories, also of the form $\iQ$ for a functor
$Q : \Rel \rto \Pos$, turned out to be extremely interesting for other
reasons.
For example---also considering the possible generalizations of these
categories that we hint to in this paper---they can accommodate in a
uniform way many categories of fuzzy sets and/or relations that have
been considered in the literature, see
e.g. \cite{Winter2007,HWW2014,Mockor2015}. These categories rely on
the unit interval $[0,1]$ and on some of its quantale structures (the
Łukasiewicz quantale, its Heyting algebra structure, the Lawvere
quantale, see e.g. \cite{Bacci2023}).  The categories of the form
$\QSet[P]$ deeply exemplify an interplay between quantales, that is,
provability models of linear logic, and \SMCC, the models of proofs of
the same logic.
For intuitionistic logic, this interplay is well known, the
connections between Heyting algebras, Cartesian closed categories, and
topoi being the object of several monographs.  Whether this interplay
is still relevant for linear logic is, in our opinion, unclear. The
results presented in this paper can be understood as providing
evidence and suggesting a positive answer to this question.

  \medskip
  
  This paper is organised as follows. 
  We give in Section~\ref{sec:background} some background on posets,
  complete lattices, and monoidal categories; by doing so, we also
  settle the notation. We also define in this section the total
  category $\iQ$ and recall some of its properties.
  In Section~\ref{sec:liftingFunctors} we characterise how to lift a
  functor of several variables, some contravariants and some
  covariants, from $\C$ to $\iQ$. Section~\ref{sec:monoidal}, relying
  on the tools developed in the previous section, develops a
  methodology to devise exact conditions for lifting structures; the
  methodology is illustrated with the symmetric monoidal structure.
  The following Section~\ref{sec:closed} gives incremental conditions
  for lifting the closed structure, and Section~\ref{sec:staraut}
  gives the conditions for lifting a dualizing object, thus the \saut
  structure.
  In Section~\ref{sec:doubleNegation} we consider the case of a
  monoidal functor $Q$ into $\SLatt$, for which the results in the
  previous sections ensure that $\iQ$ is \SMC; we exhibit then a
  double negation construction, similar to the one in quantale theory,
  by which $Q$ is transformed into a functor $\Qj$ such that
  $\int \Qj$ is \saut.  We also give a representation theorem for
  those monoidal functors $Q$ into $\SLatt$ for which $\int Q$ is
  \saut via a sort of phase semantics. 
  In Section~\ref{sec:liftingfixedpoints} we study the categories of
  algebras and coalgebras (as well as their initial or \final objects)
  of an endofunctor $\lF$ of $\iQ$ that is the lifting
  of an endofunctor $F$ of $\C$.  Section~\ref{sec:examples}
  exemplifies the scope of the theory, while the last
  Section~\ref{sec:conclusions} gives concluding remarks and hints for
  future research.
  Due to the page limit, we present the proofs in the appendix.

\section{Background, and the  \Gc}
\label{sec:background}

\Pos shall denote the category of partially ordered sets (posets) and
order-preserving (or monotone) maps.  The category \SLatt has complete
lattices as objects and sup-preserving maps as arrows. It is a
subcategory of $\Pos$.  We refer the reader to standard monographs,
for example \cite{DP} and \cite{EGGHK}, for an introduction to posets
and complete lattices.
A monotone map $f:X\rto Y$ between posets has $g : Y \rto X$ as a
\emph{\ra} adjoint if and only if, for each $x\in X$ and $y \in Y$,
$f(x) \leq y$ is equivalent to $x \leq g(y)$. Such a map $f$ can have
at most one \ra for which we shall use the notation $\Star{f}$.  If
$f$ has a \ra, then $f$ is \sp and $\Star{f}$ \ip.  If $X$ and $Y$ are
complete lattices, then $f$ has a \ra if and only if it is \sp.  An
explicit formula for $\Star{f}$ is given by
$ \Star{f}(y) = \bigvee \set{x \in X \mid f(x)\leq y}$. If
$f : X \rto Y$ is a map in $\SLatt$, that is, if $f$ is \sp and $X,Y$
are complete lattices, then we can see its \ra as a \sp map
$\Star{f} : Y^{\op}\rto X^{\op}$.
A \emph{closure operator} on a complete lattice $L$ is an \opr map
$\jmath : L \rto L$ such that $x \leq \jmath(x)$ and
$\jmath(\jmath(x)) = \jmath(x)$, for all $x \in L$.
A \emph{quantale} $(Q,e,\qmult)$ is an ordered monoid whose underlying
poset is complete and whose multiplication is \sp in each
variable. Said otherwise, it is monoid in the monoidal category
$\SLatt$, or a complete posetal \SMCC, see \cite{EGGHK}.
We assume that the reader is familiar with elementary category theory
as exposed, for example, in the monograph \cite{MacLane}. We will
mainly focus on symmetric monoidal categories
$(\C,\tensor, I, \alpha,\lambda,\rho, \sigma)$, where $I$ is the unit
of the tensor $\tensor$, $\alpha$ is the associator, $\lambda$ and
$\rho$ are respectively the left and right unitors, and $\sigma$ is
the symmetry. Moreover, if $\C$ is \SMC, then we will denote by
$\funNB \impl \funNB :\C^{\op}\times \C\rto \C$ its internal
hom. Given an object $\zero$ we define a contravariant functor
$\Star{\fun}\eqdef\funNB\impl\zero$ and $\C$ is a \sautcat if the
canonical arrow $j_X:X\rto \SSX$ is invertible for every object $X$.

\medskip

In this
paper we shall
consider functors of the form $Q : \C \rto \Pos$.
\begin{definition}
  The \Gr (or total) category of $Q$, noted $\int Q$, is defined as
  follows:
  \begin{itemize}
  \item an object is a pair $(X,\alpha)$ with $X$ an object of $\C$
    and $\alpha \in Q(X)$,
  \item an arrow $(X,\alpha) \rto (Y,\beta)$ is an arrow
    $f : X \rto Y$ such that $Q(f)(\alpha) \leq \beta$.
  \end{itemize}
\end{definition}

The following lemma has an easy proof, that we skip:
\begin{lemma}
  \label{lemma:invertible}
  A morphism $f : (X,\alpha) \rto (Y,\beta)$ is invertible if and only
  if $f : X \rto Y$ is invertible and $Q(f)(\alpha) = \beta$.
\end{lemma}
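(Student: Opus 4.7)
The plan is to prove both directions directly from the definition of $\int Q$, using only the functoriality of $Q$ and the fact that $Q$ lands in \Pos (so each $Q(f)$ is monotone).

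For the ``only if'' direction, suppose $f$ is invertible in $\int Q$ with inverse $g : (Y,\beta) \rto (X,\alpha)$. First I would observe that the forgetful functor $\pi : \int Q \rto \C$ sending $(X,\alpha)\mapsto X$ and acting as the identity on arrows is well-defined, so it preserves invertibility: hence $f : X \rto Y$ is invertible in $\C$ with inverse $g$. Next, I need the equality $Q(f)(\alpha)=\beta$. The inequality $Q(f)(\alpha)\leq \beta$ comes for free from $f$ being an arrow of $\int Q$. For the reverse, I would apply the monotone map $Q(f)$ to the inequality $Q(g)(\beta)\leq \alpha$ (which holds since $g$ is an arrow of $\int Q$) and rewrite the left-hand side via functoriality: $Q(f)(Q(g)(\beta)) = Q(f\circ g)(\beta) = Q(\id_Y)(\beta)=\beta$, yielding $\beta \leq Q(f)(\alpha)$.

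For the ``if'' direction, suppose $f : X\rto Y$ has an inverse $g$ in $\C$ and $Q(f)(\alpha)=\beta$. I would verify that $g$ underlies a morphism $(Y,\beta)\rto (X,\alpha)$ of $\int Q$ by computing $Q(g)(\beta) = Q(g)(Q(f)(\alpha)) = Q(g\circ f)(\alpha) = Q(\id_X)(\alpha) = \alpha$, so in particular $Q(g)(\beta)\leq \alpha$. The composites $f\circ g$ and $g\circ f$ equal the respective identities on the underlying $\C$-arrows, and these identities are the identities of $\int Q$ at $(Y,\beta)$ and $(X,\alpha)$.

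There is essentially no obstacle: the only subtlety is in the forward direction, where the equality $Q(f)(\alpha)=\beta$ (rather than just the inequality) requires combining both $Q(f)(\alpha)\leq\beta$ and $Q(g)(\beta)\leq\alpha$ via monotonicity and functoriality. Everything else is bookkeeping.
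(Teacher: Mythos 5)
Your proof is correct; the paper explicitly skips the proof of this lemma as "easy", and your argument is exactly the standard one it has in mind (the projection reflects the underlying isomorphism, and the two defining inequalities for $f$ and its inverse combine via functoriality and monotonicity to force $Q(f)(\alpha)=\beta$). Nothing to add.
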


The first projection $(X,\alpha) \mapsto X$  yields a functor 
$$
\pi : \int Q \rto \C\,
$$
which is the usual example of an \emph{op-fibration}. We avoid
defining here this notion which, while pervasive in many cases, tuns
out to be peripheral in our development.

\begin{commentout}
  
where the op-caretsian lifting of an arrow
$f : \pi_{1}(X,\alpha) \rto Y$ is the map $f$ itself, considered now
as a map $(X,\alpha) \rto (Y,Q(f)(\alpha))$.

\begin{definition}
  Let $p : \E \rto \B$ be a functor. An arrow $\phi : e \rto e '$ of
  $\E$ is said to be \emph{op-cartesian} if, for each arrow
  $\psi : e \rto e ''$ and $g : p(e ') \rto p(e '')$ such that
  $g \circ p(\phi) = p(\psi)$, there exists a unique
  $\chi : e ' \rto e ''$ such that $p(\chi) = g$ and $\psi = \chi
  \circ \phi$.

  A functor $p : \E \rto \B$ is called an \emph{op-fibration} if, for
  any object $e$ of $\E$ and arrow $f : p(e) \rto b$ in $\B$, there is
  an op-cartesian arrow $\phi : e \rto e '$ such that $p(\phi) = f$.

  We say that $p$ is \emph{posetal} if, for each object $b \in \B$,
  the fiber category $p^{-1}(b)$ is a poset.

\end{definition}

Let $p : \E \rto \B$ be a posetal op-fibration and let $\gamma$ be a
\emph{cleavage}, that is, the choice of an op-cartesian arrow
$\gamma_{e,f}$ for each arrow $f : p(e) \rto b$.

Then, by using the cleavage, we can define a functor
$Q : \B \rto \Pos$, sending an object $b$ to $p^{-1}(b)$ and an arrow
$f : b \rto b'$ sending $e \in p^{-1}(b)$ to the codomain $e '$ of
$\gamma_{e,f} : e \rto e '$.  Indeed, if $\psi : e \leq \tilde{e}$, then
the following diagrams are commutative in $\B$ and $\E$, respectively:
\begin{center}
  \begin{tikzcd}
    b \ar[swap]{d}{f= p(\gamma_{e,f})}\ar{r}{id}& b \ar{d}{f=p(\gamma_{\tilde{e},f})}\\
    b ' \ar{r}  \ar{r}{id} & \tilde{e}
  \end{tikzcd}
  \qquad
  \begin{tikzcd}
    e \ar{d}{\gamma_{e,f}}\ar{r}{\psi }& \tilde{e} \ar{d}{\gamma_{\tilde{e},f}}\\
    e ' \ar{r}  \ar{r}{\chi} & e ''
  \end{tikzcd}
\end{center}
where $\chi$ such that $p(\chi) = id_{b'}$ is determined by the
universal property along $\psi \circ \gamma_{e ',f} : e \rto e ''$.
Notice that this functor is an actual functor (not a pseudo functor)
since we are assuming that $p: \E \rto \B$ is posetal.
\end{commentout}

\medskip

The main aim of this paper is to describe the conditions for which some structures on the
category $\C$ (monoidal, symmetric monoidal closed, \saut, \ldots)
can be lifted to $\int Q $ in a way that the projection functor
strictly preserves the structure.

\section{Lifting functors from $\C$ to $\int Q$}
\label{sec:liftingFunctors}
Let $F : (\C^{op})^{n} \times \C^{m } \rto \C$ be a functor.
A \emph{lifting} of $F$ to $\int Q$ is a functor $\lift{F} : (\iQ^{op})^{n}
\times \iQ^{m } \rto \iQ$ such that the following diagram   strictly
commutes:
\begin{center}
  \begin{tikzcd}[ampersand replacement=\&]
    (\iQ^{op})^{n}
    \times \iQ^{m } \ar{r}{\lift{F}} \ar{d}{\pi^{n} \times \pi^{m}}\&
    \iQ \ar{d}{\pi}
    \\
     (\C^{op})^{n} \times \C^{m} \ar{r}{F} \& \C 
  \end{tikzcd}
\end{center}
A  \emph{\lent}
$\psi : \prod  Q^{n + m} \rto Q \circ F $ is a collection of 
order-preserving maps
$$\psi_{X,Y} : \prod_{i} Q(X_{i})^{op} \times \prod_{j}
Q(Y_{j}) \rto Q(F(X,Y))$$ indexed by objects $(X,Y)$ of
$\C^{n} \times \C^{m}$, such that, for each pair of maps
$f : X\rto X'$ in $\C^{n}$ and $g : Y \rto Y'$ in $\C^{m}$, the
following diagram semi-commutes:
\begin{equation}
  \label{eq:laxF}
  \adjustbox{scale=0.9,center}{
  \begin{tikzcd}[column sep=12mm,row sep=15mm]
    & \prod_{i} Q(X_{i})^{op} \times \prod_{j}Q(Y_{j}) \ar{rd}{\id \times
      \prod_{i} Q(g_{j})}\ar[swap]{ld}{\prod_{i} Q(f_{i})^{op}\times \id} \\
    \prod_{i} Q(X_{i}')^{op} \times \prod_{j } Q(Y_{j})
    \ar[swap]{d}{\psi_{X',Y}} &&\prod_{i}Q(X_{i})^{op} \times \prod_{j}Q(Y_{j}') \ar{d}{\psi_{X,Y'}}  \\
     Q(F(X', Y))   \ar{rr}{Q(F(f, g))} \ar[Rightarrow]{rru}{}
    && Q(F(X,Y')) 
  \end{tikzcd}
}
\end{equation}
By saying that the diagram above semi-commutes we mean that the lower
leg of the diagram is below the upper leg in the pointwise ordering.

Remark that the notation
$\psi : \prod  Q^{n + m} \rto Q \circ F$ is just a matter of
convenience, since for $\prod  Q^{n + m} : \C^{n + m} \rto \Pos$
while $Q \circ F : (\C^{op})^{n} \times \C^{m} \rto \Pos$.  Indeed,
$\prod$ does not make a functor from
$(C^{op})^{n} \times \C^{n} \rto \C$.
\begin{theoremEnd}
  [category=liftingFunctors]
  {theorem}
  \label{thm:lifting}
  There is a bijection beweeen liftings a functor $F$ from $\C$ to
  $\int Q$ and \lent{s}
  $\psi : \prod  Q^{n + m} \rto Q \circ F$.
\end{theoremEnd}
\begin{proofEnd}
  Let us first introduce some notation. For
  $X = (X_{1},\ldots ,X_{k})$ an object of $\C^{k}$, we write $Q(X)$
  for the product $\prod_{i = 1,\ldots ,k} Q(X_{i})$, and then, for
  $x \in Q(X) = \prod_{i} Q(X_{i})$, we write $(X,x)$ to denote the
  object $((X_{1},x_{1}),\ldots ,(X_{k},x_{k}))$ of $(\int Q)^{k}$.
  Suppose that we have a lifting $\lF$ of $F$.
  We can build $\lift{F}((X,x),(Y,y)) = (F(X,Y), z)$ with
  $z \in Q(F(X,Y))$. Calling then $z = \psi_{X,Y}(x,y)$, this defines
  a map $\psi_{X,Y} : Q(X) \times Q(Y) \rto Q(F(X, Y))$.
  We claim that $\psi_{X,Y}$ is monotone in $y \in Q(Y)$ and antitone
  in $x \in Q(X)$. For example, if $x,x' \in Q(X)$ and $x' \leq x$,
  then $\id_{X}$ is a map in $(\iQ)^{n}$ from $(X,x')$ to $(X,x)$; then,
  by functoriality and since $\lF$ is contravariant in $X$,
  $\lF(\id_{X},(Y,y)) $ is a map in $\iQ$ from
  $\lF((X,x),(Y,y))$ to
  $\lF((X,x'),(Y,y))$,
  which amounts to the inequality
  $$
  \psi_{X,Y}(x,y) = Q(F(\id_{X}, \id_{Y}))(\psi_{X,Y}(x,y)) \leq
  \psi_{X,Y}(x',y)\,.$$
  Next, if $f = (f_{1},\ldots,f_{n}): (X,x) \rto (X',x')$ and
  $g = (g_{1},\ldots,g_{m}) : (Y,y) \rto (Y',y')$ are maps in
  $(\iQ)^{n}$ and $(\iQ)^{m}$, then
  $\lF(f,g) : \lF((X',x'),(Y,y)) \rto \lF((X,x),(Y',y'))$ is a map of
  $\iQ$.  This statement, an entailment, is equivalent to the
  statement $Q(f)(x) \leq x'$ and $Q(g)(y) \leq y'$ implies
  $Q(F(f,g))(\psi_{X',Y}(x',y)) \leq \psi_{X,Y'}(x,y')$ for all
  $x,x',y,y'$. Letting $x' = Q(f)(x) $ and $y' = Q(g)(y)$ in the last
  inequality, we deduce the inclusion
  \begin{align*}
    Q(F(f, g))(\psi_{X',Y}(Q(f)(x),y)) & \leq
    \psi_{X,Y'}(x,Q(g)(y))\,,
  \end{align*}
  which amounts to the semi-commutativity of \eqref{eq:laxF}.  On the
  other hand, this inclusion is sufficient as well: if
  $Q(f)(x) \leq x'$ and $Q(g)(y) \leq y'$, then, by the order
  properties of $\psi$, we have
  \begin{align*}
    Q(F(f,g))(\psi_{X',Y}(x',y)) & \leq
    Q(F(f,g))(\psi_{X,Y}(Q(f)(x),y)) \\
    & \leq \psi_{X,Y'}(x,Q(g)(y))\leq \psi_{X,Y'}(x,y')\,.
  \end{align*}
  Consequently, the contruction of $\psi$ from $\lF$ can be reversed,
  thus building a lifting of $F$ to $\int Q$ from a collection $\psi$
  with the stated properties.
\end{proofEnd}

We shall say that \emph{$Q$ factors through
  $\SLatt$} if
\begin{enumerate}[(a)]
\item each $Q(X)$ is complete lattice,
\item for $f : X \rto Y$, $Q(f) : Q(X) \rto Q(Y)$ is \sp.
\end{enumerate}
The more advanced reader will have recognised that $Q$ factors through
$\SLatt$ means indeed that we can write $Q = U \circ Q_{0}$ for a
functor $Q_{0} : \C \rto \SLatt$, where $U : \SLatt \rto \Pos$ is
the inclusion functor.  If $Q$ factors through $\SLatt$, then
(lax) extranaturality can be replaced by (lax) naturality, as stated
in the next Lemma.

\begin{theoremEnd}
  [category=liftingFunctors]
  {lemma}
  Suppose $Q$ factors through \SLatt. Then 
   any \lent{s}
  $\psi : \prod  Q^{n + m} \rto Q \circ F$
  makes the diagram below 
  semi-commutative.
  \begin{equation}
    \nonumber
    \begin{tikzcd}[column sep=15mm,row sep=12mm,ampersand replacement= \&]
      \prod_{i} Q(X_{i}')^{op} \times \prod_{j } Q(Y_{j})
      \ar[swap]{d}{\psi_{X',Y}} 
      \ar{rr}{\prod_{i} \Star{Q(f_{i})} \times
        \prod_{i} Q(g_{j})}
      \&\&\prod_{i}Q(X_{i})^{op} \times \prod_{j}Q(Y_{j}') 
      \ar{d}{\psi_{X,Y'}}  \\
       Q(F(X', Y))   \ar{rr}{Q(F(f, g))} 
       \ar[Rightarrow]{rru}{}
      \&\& Q(F(X,Y')) 
    \end{tikzcd}
  \end{equation}
\end{theoremEnd}

\begin{proofEnd}
  Just like before, the maps $\psi_{X,Y}$ are antitone on the first
  variable and monotone on the second. If $f :(X,x) \rto (X',x')$ is
  in $\int Q$, then $Q(f)(x) \leq x'$ which as $Q(f)$ is a
  sup-preserving map is equivalent to $x \leq \Star{Q(f)}(x')$. If
  moreover $g : (Y,y) \rto (Y',y')$ -- ie $Q(g)(y) \leq y'$ -- then
  $Q(F(f,g))(\psi_{X',Y}(x',y)) \leq \psi_{X,Y'}(x,y')$ for all
  $x,x',y,y'$. Again, by instancing this last statement on
  $x =\Star{Q(f)}(x')$ and $y' =Q(g)(y)$ and by the order properties
  of $\psi_{X,Y}$, we have that
\begin{align*}
Q(F(f,g))(\psi_{X',Y}(x',y)) \leq
\psi_{X,Y'}(\Star{Q(f)}(x'),Q(g)(y))\,.
\tag*{\qedhere}
\end{align*}
\end{proofEnd}
\noindent
Note that the $\psi_{X,Y}$ need not be \sp, even when $Q$ factors
through $\SLatt$.

\section{Lifting the monoidal structure}
\label{sec:monoidal}

Theorem \ref{thm:lifting} immediately tells how to lift the tensor $\tensor$, as a
simple bifunctor, from $\C$ to $\int Q$.

\begin{proposition}
  There is a bijection between the following kind of data:
  \begin{itemize}
  \item a lifting of a bifunctor $\otimes $ from $\C$ to $\int Q$,
    
  \item a collection of order-preserving maps
    $$\set{\mu_{X,Y} : Q(X) \times Q(Y) \rto Q(X \tensor Y) \mid X,Y \in
      \Obj(\C)}\,,$$ such that, for $f : X \rto X'$ and $g : Y \rto Y'$,
      the following diagram semi-commutes:
    \begin{equation}
      \label{eq:laxmonoidal}
      \begin{tikzcd}[column sep=20mm]
        Q(X) \times Q(Y) \ar{r}{Q(f) \times
          Q(g)}\ar[swap]{d}{\mu_{X,Y}} & Q(X') \times Q(Y') \ar{d}{\mu_{X',Y'}}\\
        Q(X \tensor Y) \ar{r}{Q(f \tensor g)}  \ar[Rightarrow]{ru}{} 
         & Q(X' \tensor Y')
      \end{tikzcd}
    \end{equation}
  \end{itemize}
\end{proposition}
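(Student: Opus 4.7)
The plan is to derive this Proposition as a direct specialization of Theorem~\ref{thm:lifting} applied to the bifunctor $F = \tensor : \C \times \C \to \C$, which corresponds to the case $n = 0$ (no contravariant arguments) and $m = 2$ (two covariant arguments). So there is essentially no new content to prove; the work is purely notational unfolding.

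First, I would unfold the data of a \lent $\psi : \prod Q^{0+2} \rto Q \circ \tensor$ in this case. Since there are no contravariant factors, the indexed family reduces to an assignment, for each pair of objects $(X,Y)$ of $\C \times \C$, of an order-preserving map from $Q(X) \times Q(Y)$ to $Q(X \tensor Y)$; these are precisely the $\mu_{X,Y}$ of the statement. Next, I would specialize the semi-commutativity diagram \eqref{eq:laxF}. With $n=0$, the contravariant branch collapses (the product over the empty set is a singleton, and the leg $\prod_i Q(f_i)^{op} \times \id$ becomes an identity), so what remains is a square rather than a triangle. Substituting $m=2$ and a pair of maps $f : X \rto X'$, $g : Y \rto Y'$ in place of the single covariant tuple, one obtains exactly the square \eqref{eq:laxmonoidal}.

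The bijection between liftings of $\tensor$ and such families $\{\mu_{X,Y}\}$ is then immediate from Theorem~\ref{thm:lifting}. The only point that could conceivably trip one up is the translation between the single $\C^m$-indexed lax-naturality condition of Theorem~\ref{thm:lifting} and the pair-of-maps formulation used here; but since $\C^2$-morphisms are exactly pairs $(f,g)$ and $\mu$ is order-preserving in both arguments, the two formulations coincide. There is no genuine obstacle.
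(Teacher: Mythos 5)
Your proposal is correct and matches the paper exactly: the paper presents this proposition as an immediate specialization of Theorem~\ref{thm:lifting} to the bifunctor $\tensor$ with $n=0$ and $m=2$, which is precisely the unfolding you carry out. The collapse of the contravariant leg of diagram~\eqref{eq:laxF} into the square~\eqref{eq:laxmonoidal} is the only translation needed, and you handle it correctly.
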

Let us remark that the maps $\mu_{X,Y}$ are natural (i.e. diagram
\eqref{eq:laxmonoidal} fully commutes) if and only if the lifted
bifunctor preserves op-cartesian arrows, that is the projection
functor $\iQ \rto \C$ is a monoidal op-fibration as defined in
\cite{MV2020}.  In a similar way, lifting the unit (of a tensor
structure) as a constant functor yields $u \in Q(I)$ or, equivalently,
$u : 1 \rto Q(I)$.

Next, lifting the maps ($\lambda, \rho, \alpha, \sigma$) of a
symmetric monoidal structure on $\C$ simply amounts to requiring the
following inclusions:
\begin{align*}
  Q(\lambda)(\mu_{I,Y}(u,y))& \leq y \,, &
  Q(\rho)(\mu_{X,I}(x,u)) & \leq x\,,\\
  Q(\alpha)(\mu_{X\otimes Y,Z}(\mu_{X,Y}(x,y),z)) & \leq \mu_{X,Y
    \otimes Z}(x,\mu_{Y,Z}(y,z)) \,, \\ 
  Q(\sigma)(\mu_{X,Y}(x,y)) & \leq \mu_{Y,X}(y,x)\,.
\end{align*}

Considering that we actually need these lifted maps to be invertible,
we require (by Lemma~\ref{lemma:invertible}) the above inclusions to
be equalities. Thus we have $u :1 \rto Q(I)$ and
$\mu_{X,Y} : Q(X) \times Q(Y) \rto Q(X \tensor Y)$, where the latter
is lax natural,
making commutative the usual diagrams for a monoidal functor, see
Figure~\ref{fig:monoidalFunctor}.
\begin{figure}[h]
  \begin{equation}
    \nonumber
    \begin{tikzcd}
    1 \times Q(X) \ar{d}{\lambda_{Q}}  \ar{r}{u \times id}&
    Q(I) \times Q(X) \ar{d}{\mu} \\
    Q(X) & Q(I \tensor X) \ar{l}{Q(\lambda)}
  \end{tikzcd}
  \qquad
  \begin{tikzcd}
    Q(X) \times 1 \ar{d}{\rho_{Q}}  \ar{r}{id \times u}&
    Q(X) \times Q(I) \ar{d}{\mu} \\
    Q(X) & Q(X \tensor I) \ar{l}{Q(\rho)}
  \end{tikzcd}
  \end{equation}
  \begin{equation}
    \nonumber    
    \begin{tikzcd}[column sep=0.8mm]
      (Q(X) \times Q(Y)) \times Q(Z) \ar{r}{\alpha_{Q}}
      \ar{d}{\mu \times id}
      &  Q(X) \times (Q(Y) \times
      Q(Z)) \ar{d}{id \times \mu}\\
      Q(X \tensor Y) \times Q(Z)     \ar{d}{\mu}
      &  Q(X) \times Q(Y \tensor
      Z) \ar{d}{\mu} \\
      Q((X \tensor Y) \tensor Z) \ar{r}{Q(\alpha)} &  Q(X \tensor (Y \tensor
      Z)) \\
    \end{tikzcd}
    \hspace{-8mm}
    \begin{tikzcd}[row sep=12mm]
      Q(X) \times Q(Y) \ar{r}{\sigma_{Q}}  \ar{d}{\mu_{X,Y}}&
      Q(Y) \times Q(X) \ar{d}{\mu_{Y,X}} \\
      Q(X \tensor Y) \ar{r}{Q(\sigma)} & Q(Y \tensor X) 
    \end{tikzcd}    
  \end{equation}
  \caption{Coherence diagrams for a monoidal functor.}
  \label{fig:monoidalFunctor}
\end{figure}

\section{Lifting the closed structure}
\label{sec:closed}

We suppose next that we have lifted the symmetric monoidal structure
from $\C$ to $\int Q$ via a lax natural $\mu$ as in the previous
section, and that $\C$ is closed as well. Let us recall that $\C$ is
closed if, for each object $X$ of $\C$, the functor $X \tensor -$
(tensoring with $X$) has a right adjoint, noted here $X \impl -$. As
from elementary theory, the right adjoint is made into a bifunctor
$\impl : \C^{\op} \times \C \rto \C$. Moreover, for each pair of
objects $X$ and $Y$ of $\C$, we have maps (the units and counits of
the adjunction)
\begin{align*}
  \eta_{X,Y}  &  : Y \rto X \impl (X \tensor Y)\,, & \ev_{X,Y} & : X \tensor (X \impl Y) \rto Y\,, 
\end{align*}
natural in $Y$, satisfying the usual pasting diagrams for adjunctions:
\begin{align}
  (X \impl \ev_{X,Y}) \circ \eta_{X,X \impl Y} & = \id_{X \impl Y}\,,
  &
  \ev_{X,X \tesnor Y} \circ (X \tensor \eta_{X,Y}) & = \id_{X \tensor Y}\,.\label{eq:adj}
\end{align}
The following statement, allowing to lift the bifunctor $\impl$, is a
direct consequence of Theorem~\ref{thm:lifting}.
\begin{proposition}
  There is a bijection between the following kind of data:
  \begin{itemize}
  \item a lifting of a bifunctor $\impl: \C^{op}\times \C \rto \C$ to a
    functor $(\int Q)^{op} \times \int Q \rto \int Q$,
    
  \item a collection of order-preserving maps
    $$\set{\iota_{X,Y} : Q(X)^{op} \times Q(Y) \rto Q(X \impl Y) \mid X,Y \in
      \Obj(\C)}\,,$$ such that, for $f : X \rto X'$ and $g : Y \rto Y'$,
      the following diagram semi-commutes:
    \begin{equation}
      \label{eq:laxImpl}
      \begin{tikzcd}[column sep=15mm]
        Q(X)^{op} \times Q(Y) \ar{r}{\id \times
          Q(g)}\ar[swap]{d}{Q(f)^{op}\times \id}
        &Q(X)^{op} \times Q(Y') \ar{dd}{\iota_{X,Y'}}  \\
        Q(X')^{op} \times Q(Y) \ar[swap]{d}{\iota_{X',Y}} & \\
        Q(X' \impl Y)   \ar{r}{Q(f \impl g)} \ar[Rightarrow]{ruu}{}
        & Q(X \impl Y') 
      \end{tikzcd}
    \end{equation}
  \end{itemize}
\end{proposition}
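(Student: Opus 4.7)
The plan is to invoke Theorem~\ref{thm:lifting} directly, with no additional work beyond specialization. The bifunctor $\impl : \C^{op} \times \C \rto \C$ fits the template $F : (\C^{op})^{n} \times \C^{m} \rto \C$ of the theorem with $n = 1$ and $m = 1$: one contravariant and one covariant argument.

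Applying Theorem~\ref{thm:lifting} in this case, liftings of $\impl$ to a functor $(\iQ)^{op} \times \iQ \rto \iQ$ are in bijection with \lent{s} $\psi : \prod Q^{2} \rto Q \circ \impl$, which is by definition a family of order-preserving maps indexed by pairs $(X,Y)$ of objects of $\C^{op} \times \C$ of the form
\begin{equation*}
  \psi_{X,Y} : Q(X)^{op} \times Q(Y) \rto Q(X \impl Y)\,,
\end{equation*}
satisfying the semi-commutativity condition \eqref{eq:laxF}. Renaming $\psi_{X,Y}$ as $\iota_{X,Y}$ gives the stated family. It remains only to observe that instantiating diagram \eqref{eq:laxF} with $n = m = 1$ and $F = \impl$ produces, after trivial reshuffling of the factors in the products, exactly diagram \eqref{eq:laxImpl}. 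There is no real obstacle here: the statement is essentially an unpacking of the general theorem for the particular shape of $\impl$, and the correspondence is precisely what the theorem delivers.
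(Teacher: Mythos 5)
Your proposal is correct and matches the paper exactly: the paper itself introduces this proposition as ``a direct consequence of Theorem~\ref{thm:lifting}'' and offers no further argument, so specializing that theorem to $n=m=1$ with $F = \impl$ and noting that diagram~\eqref{eq:laxF} instantiates to diagram~\eqref{eq:laxImpl} is precisely the intended proof.
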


\begin{commentout}
  \begin{remark}
    For $f = \id_{X} : X \rto X$ in \eqref{eq:laxImpl} we obtain
    semi-commutativity of the following diagram:
    \begin{equation}
      \label{eq:laxImpl2}
      \begin{tikzcd}[column sep=15mm]
        Q(X)^{op} \times Q(Y) \ar[swap]{d}{\iota_{X,Y}} \ar{r}{\id
          \times Q(g)} 
        &Q(X)^{op} \times Q(Y') \ar{d}{\iota_{X,Y'}}  \\
        Q(X \impl Y) \ar{r}{Q(X \impl g)} \ar[Rightarrow]{ru}{} & Q(X
        \impl Y')
      \end{tikzcd}
    \end{equation}
  \end{remark}
\end{commentout}

Once the bifunctor $\impl$ has been lifted, in order to lift the
adjunction $X \tensor \funNB \dashv X \impl \funNB$, we require the unit and
counit to be maps of $\iQ$. This is achieved by requiring the
following inequalities to hold, for each pair of objects $X,Y$ of
$\C$, each $x \in Q(X)$ and $y \in Q(Y)$:
\begin{align}
  \label{eq:unit0}
  Q(\eta_{X,Y})(y) & \leq \iota_{X,X \tensor Y}(x,\mu_{X,Y}(x,y))\,,
  \\
  \label{eq:counit0}
  Q(\ev_{X,Y})(\mu_{X,X \impl Y}(x,\iota_{X,Y}(x,y)) & \leq y\,.
\end{align}

Due to its central role in this paper, let us introduce an explicit
notation for  the map
\begin{align*}
  \brack{\,\funNB, \,\funNB\,}_{X,Y} & \eqdef  Q(\ev_{X,Y}) \circ \mu_{X,X
    \impl Y} :  Q(X) \times Q(X \impl Y) \rto Q(Y)\,.
\end{align*}
Notice  that the inclusion~\eqref{eq:counit0} is then written as
$\brack{x,\iota_{X,Y}(x,y)}_{X,Y} \leq y$.

\begin{theoremEnd}[category=closed]{theorem}
  \label{thm:main}
  The following conditions are equivalent:
  \begin{enumerate}[(i)]
  \item $\C$ and $\iQ$ are symmetric monoidal closed and
    $\pi : \iQ \rto \C$ strictly preserves this structure. \label{thm:enum:C and iQ smcc}
  \item   
  For each pair of objects $X,Y$ of $\C$, and each $x \in Q(X)$ and $y
  \in Q(Y)$, the equality
  \begin{align}
    \label{eq:mudefinable}
    \mu_{X,Y}(x,y) & = \brack{\,x,\,Q(\eta_{X,Y})(y)\,}_{X,X \tensor Y}
  \end{align}
  holds
  and, for each $\alpha \in Q(X)$, the map
  \begin{align}
    \brack{\alpha,\,\funNB\,}_{X,Y} & : Q(X \impl Y) 
    \rto Q(Y)
    \label{map:needRightAdj}
  \end{align}
  has a right adjoint $\iota_{X,Y}(\alpha,\,\funNB\,)$.\label{thm:enum:right adjoint de eval}
  \end{enumerate}
\end{theoremEnd}

\begin{proofEnd}
  Firstly, let us recall that a map
  $ \iota_{Y}^{\alpha} : Q(Y) \rto Q(X \impl Y) $ is the right adjoint of  $\brack{\alpha,\,\cdot\,}_{X,Y}$  if and only if the usual
  unit-counit relations
  \begin{align}
    \label{eq:unitcounit1}
    \beta & \leq\iota^{\alpha}_{Y}(\,\brack{\alpha,\beta}_{X,Y}\,)\,,
    &
    \brack{\,\alpha,\,\iota^{\alpha}_{Y}(\gamma)\,}_{X,Y} & \leq \gamma \,,
  \end{align}
  hold, for each $\beta \in Q(X \impl Y)$ and $\gamma \in Q(Y)$.
  \newline
  Assume (\ref{thm:enum:C and iQ smcc}).  Let
  $\iota_{X,Y} : Q(X)^{op}\times Q(Y) \rto Q(X \impl Y)$ satisfying
  \eqref{eq:unit0} and \eqref{eq:counit0} be given. Let us show first
  that \eqref{eq:mudefinable} holds.  Recall that
  $Q(X \tensor \eta_{X,Y})(\mu_{X,Y}(x,y)) \leq \mu_{X,X \impl (X
    \tensor Y)}(x,Q(\eta_{X,Y})(y))$. Then
  \begin{align*}
    \mu_{X,Y}(x,y) & = Q(\ev_{X,X \tensor Y})(\,Q(X \tensor
    \eta_{X,Y})(\mu_{X,Y}(x,y))\,) \\
    & 
    \leq Q(\ev_{X,X \tensor Y})(\,\mu_{X,X \impl
      (X \tensor Y)}(x,Q(\eta_{X,Y})(y))\,) =
    \brack{x,Q(\eta_{X,Y})(y)}_{X,X \tensor Y}\,.
  \end{align*}
  For the opposite inclusion, observe that, using \eqref{eq:unit0} and \eqref{eq:counit0},
  \begin{align*}
    \brack{\,x,\,Q(\eta_{X,Y})(y)\,}_{X,X \tensor Y} & \leq
    \brack{\,x,\,\iota_{X,X \tensor Y}(x,\mu_{X,Y}(x,y))\,}_{X,X
      \tensor Y}
    \leq \mu_{X,Y}(x,y)\,. 
  \end{align*}
  Next, let $\alpha \in Q(X)$ and
  define $\iota_{Y}^{\alpha}(\beta)$ as $\iota_{X,Y}(\alpha,\beta)$.
  Then \eqref{eq:counit0} immediately yields the counit of
  \eqref{eq:unitcounit1}.  For the unit, we
  argue 
  as follows:
  \begin{align*}
    \beta & = (Q(X \impl \ev_{X,Y}) \circ Q(\eta_{X,X \impl
      Y}))(\beta) \\
    & \leq Q(X \impl \ev_{X,Y})( \iota_{X,X \tensor (X \impl
      Y)}(\alpha,\mu_{X,X \impl Y}(\alpha,\beta)))\,, \tag*{using
      \eqref{eq:unit0},}
    \\
    & \leq \iota_{X,Y}(\alpha,Q(\ev_{X,Y})(\mu_{X,X \impl
      Y}(\alpha,\beta))) \,, \tag*{using
      \eqref{eq:laxImpl},} \\
    & = \iota_{Y}^{\alpha}(\brack{\alpha,\beta}_{X,Y})\,.
  \end{align*}
  Assume now (\ref{thm:enum:right adjoint de eval}), so $\iota_{Y}^{\alpha}$ is given for each
  $\alpha \in Q(X)$. We define $\iota_{X,Y}(\alpha,\beta)$ as
  $\iota_{Y}^{\alpha}(\beta)$ and verify that \eqref{eq:counit0} and
  \eqref{eq:unit0} hold.  Again, the co-unit in \eqref{eq:unitcounit1} is
  exactly equation \eqref{eq:counit0}. Thus, we only need to derive
  \eqref{eq:unit0}, 
  which we do as follows:
  \begin{align*}
    Q(\eta_{X,Y})(y) & \leq \iota_{X \tensor
      Y}^{\alpha}(\,\brack{\alpha,Q(\eta_{X,Y})(y)}_{X, X \tesnor Y}\,)\,,
    \tag*{letting $\beta =
      Q(\eta_{X,Y})(y) \in Q(X \impl (X \tensor Y))$ in the unit of \eqref{eq:unitcounit1},}
    \\
    & = \iota_{X \tensor
      Y}^{\alpha}(\,\mu_{X,Y}(\alpha,y)\,)
    =  \iota_{X,X \tensor
      Y}(\alpha,\mu_{X,Y}(\alpha,y))\,,
    \tag*{using~\eqref{eq:mudefinable}. \qedhere}
  \end{align*}
\end{proofEnd}

\begin{commentout}
  \begin{remark}
    A consequence of commutativity of diagram~\eqref{diag:mueta} is
    the following.  For $f : X \tensor Y \rto Z$, suppose that
    $Q(X \tensor X \impl f) \circ \mu_{X,X \impl Y} = \mu_{X,Z} \circ
    (Q(X) \times Q(X \impl f))$.  

    Then, for $f^{\sharp} : Y \rto X \impl Z$ the exponential
    transpose of $f : X \tensor Y \rto Z$, we have
    $Q(X \tensor f^{\sharp}) \circ \mu_{X,Z} = \mu_{X,Z} \circ (Q(X)
    \times Q(f^{\sharp}))$.  Indeed, we can write
    $f^{\sharp} = (X \impl f) \circ \eta_{X,Y}$, and then
    \begin{center}
      \begin{tikzcd}[column sep=20mm]
        Q(X) \times Q(Y) \ar{r}{Q(X) \times Q(\eta_{X,Y})}
        \ar{d}{\mu_{X,Y}} & Q(X) \times Q(X \impl (X \tensor Y))
        \ar{d}{\mu_{X,X \impl (X \tensor Y)}} \ar{r}{Q(X) \times Q(X
          \impl f)}
        & Q(X) \times Q(X \impl Z) \ar{d}{\mu_{X,X \impl Z}} \\
        Q(X \tensor Y) \ar{r}{Q(X \tensor \eta_{X,Y})} & Q(X \tensor
        (Y \impl (X \tensor Y))) \ar{r}{Q(X \tensor X \impl f)} & Q(X
        \tensor X \impl Z)
        \\
      \end{tikzcd}
    \end{center}
  \end{remark}
\end{commentout}
We notice that if $\mu$ is natural (and not merely lax natural) in its
right variable, then equation \eqref{eq:mudefinable} automatically
holds, as evident from the proof of Theorem~\ref{thm:main}. 

\medskip

We say that \emph{$Q$ monoidally factors through $\SLatt$} if $Q$
factors through $\SLatt$ and, moreover,
$\mu_{X,Y} : Q(X) \times Q(Y) \rto Q(X \tensor Y)$ is natural (and not
merely lax natural) and bilinear, that is, \sp in each variable,
separately.  That is, writing $Q= U \circ Q_0$, the extension of
$\mu_{X,Y}$ to the tensor product $Q_0(X) \tensor Q_0(Y)$ in $\SLatt$
makes $Q_0 :\C \rto \SLatt$ into a monoidal functor.
Theorem~\ref{thm:main}
immediately implies the following statement.
\begin{theorem}
  \label{thm:SLattClosed}
  If $Q$ monoidally factors through $\SLatt$, then $\iQ$ is \SMC and
  the projection functor $\pi : \iQ \rto \C$ strictly preserves this
  structure.
\end{theorem}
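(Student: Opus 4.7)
The plan is to derive Theorem~\ref{thm:SLattClosed} as a direct application of Theorem~\ref{thm:main} by verifying condition~(ii) of that theorem under the hypothesis that $Q$ monoidally factors through $\SLatt$. Since we are assuming $Q$ is already monoidal (with $\mu$ natural, not merely lax natural), we know by Section~\ref{sec:monoidal} that the symmetric monoidal structure of $\C$ lifts to $\iQ$, so it only remains to upgrade this to a closed structure.

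First, I would dispatch the equation \eqref{eq:mudefinable}. This is precisely the situation flagged in the remark immediately preceding the theorem: when $\mu$ is natural in its right variable, the identity $\mu_{X,Y}(x,y) = \brack{x, Q(\eta_{X,Y})(y)}_{X, X\tensor Y}$ holds automatically. Since full naturality of $\mu$ is part of the definition of monoidally factoring through $\SLatt$, this step is essentially free.

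The substantial step is to produce, for each $\alpha \in Q(X)$, a right adjoint to
\[
\brack{\alpha,\,\funNB\,}_{X,Y} = Q(\ev_{X,Y}) \circ \mu_{X, X\impl Y}(\alpha,\funNB) : Q(X \impl Y) \rto Q(Y).
\]
Here is where the monoidal factorization through $\SLatt$ really does the work. Bilinearity of $\mu$ means that for each fixed $\alpha \in Q(X)$ the map $\mu_{X, X\impl Y}(\alpha,\funNB) : Q(X \impl Y) \rto Q(X \tensor (X \impl Y))$ is sup-preserving. Because $Q$ factors through $\SLatt$, the map $Q(\ev_{X,Y}) : Q(X \tensor (X \impl Y)) \rto Q(Y)$ is also sup-preserving. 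The composition is therefore a sup-preserving map between complete lattices, and so admits a right adjoint by the standard fact recalled in Section~\ref{sec:background}. Defining $\iota_{X,Y}(\alpha,\funNB)$ to be this right adjoint verifies condition~(ii) of Theorem~\ref{thm:main}.

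I expect no genuine obstacle here; the argument is essentially bookkeeping, and the main thing to double-check is that the needed sup-preservation really follows from the two clauses of the definition (``$Q$ factors through $\SLatt$'' giving sup-preservation of $Q(f)$, and ``monoidally'' giving bilinearity of $\mu$). Having produced $\iota$, Theorem~\ref{thm:main} then provides the lifting of $\impl$ to $\iQ$ together with the lifted unit and counit, making $\iQ$ symmetric monoidal closed with $\pi$ strictly preserving this structure.
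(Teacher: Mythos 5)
Your proposal is correct and takes exactly the route the paper intends: the paper gives no explicit proof beyond asserting that Theorem~\ref{thm:main} "immediately implies" the statement, and your argument supplies precisely the two missing checks --- equation~\eqref{eq:mudefinable} from naturality of $\mu$ (as flagged in the remark preceding the theorem) and the existence of the right adjoint to $\brack{\alpha,\funNB}_{X,Y}$ from sup-preservation of $Q(\ev_{X,Y})$ and bilinearity of $\mu$ between complete lattices.
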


It can be expected that if $\mu$ is natural, then $\iota$ has also
some kind of naturality. This is the content of the next lemma and
corollary.
\begin{theoremEnd}
  [category=closed]
  {lemma}
  \label{lemma:iotaNatural}
  Suppose that $\mu$ is natural.  Let $f : X \rto Y$ and suppose that
  $Q(f \impl Z)$ has a right adjoint $Q(f \impl Z)^{\ast}$.  Then, for
  $\alpha \in Q(X)$ and $\gamma \in Q(Z)$, the following holds:
  \begin{align}
    \iota_{Y,Z}(Q(f)(\alpha),\gamma) & = Q(f \impl
    X)^{\ast}(\iota_{X,Z}(\alpha, \gamma ))\,.
    \label{eq:iotaRA}
  \end{align}
\end{theoremEnd}
\begin{proofEnd}
  Observe first that the following diagram commutes:
  \begin{center}
    \adjustbox{scale=0.9}{
    \begin{tikzcd}[column sep=20mm, ampersand replacement=\&]
      Q(X) \times Q(Y \impl Z) \ar[swap]{d}{Q(f) \times Q(Y \impl Z)} \ar{r}{Q(X) \times Q(f \impl Z)}
      \ar{rd}{\mu_{X,Y \impl Z}} \& Q(X) \times Q(X \impl Z) \ar{rd}{\mu_{X,X \impl Z}} \\
      Q(Y) \times Q(Y \impl Z) \ar[swap]{rd}{\mu_{Y,Y \impl Z}}
      \& Q(X \tensor (Y \impl Z)) \ar{d}{Q(f \tensor (Y \impl Z))} \ar{r}{Q(X \tensor (f \impl Z))} \& Q(X \tensor (X \impl Z)) \ar{d}{Q(\ev_{X,Z})} \\
      \& Q(Y \tensor (Y \impl Z)) \ar{r}{Q(\ev_{Y,Z})} \& Q(Z)
    \end{tikzcd}
  }
  \end{center}
  With this in mind, let us compute as follows (with $\beta \in Q(Y
  \impl Z)$):
  \begin{align*}
    \beta \leq \iota_{Y,Z}(Q(f)(\alpha),\gamma) & \Tiff
    Q(\ev_{Y,Z})(\mu(Q(f)(\alpha),\beta)) \leq \gamma \\
    & \Tiff
    Q(\ev_{X,Z})(\mu(\alpha,Q(f \impl Z)(\beta))) \leq \gamma \\
    & \Tiff
    Q(f \impl Z)(\beta) \leq \iota_{X,Z}(\alpha,\gamma) \\
    & \Tiff
    \beta \leq Q(f \impl Z)^{\ast}(\iota_{X,Z}(\alpha,\gamma))
    \tag*{\qedhere}\,.
  \end{align*}
\end{proofEnd}
Notice that Lemma~\ref{lemma:iotaNatural} applies when $f : X \rto Y$
is invertible, in which case
$Q(f \impl X)^{\ast} = Q(f^{-1} \impl X)$.

\begin{commentout}
  \begin{corollary}
    Suppose that $\mu$ is natural, $\C$ is a monoidal closed posetal
    $2$-category, $Q$ is a $2$-functor, and $f : X \rto Y$ has
    $f^{\ast} : Y \rto X $ as right adjoint.
    \begin{align*}
      \iota_{X,Z}(Q(f^{\ast})(\alpha),\gamma) & = Q(f \impl
      Z)(\iota_{Y,Z}(\alpha,\gamma))\,,
    \end{align*}
    for each $\alpha \in Q(X)$ and $\gamma \in Q(Z)$.
  \end{corollary}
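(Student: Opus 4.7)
The plan is to derive this corollary as a direct consequence of Lemma~\ref{lemma:iotaNatural} applied to $f^{\ast}$ in place of $f$. A preliminary observation is that type-checking forces $\alpha \in Q(Y)$: since $Q(f^{\ast}) : Q(Y) \rto Q(X)$, the given ``$\alpha \in Q(X)$'' should be read as a misprint for ``$\alpha \in Q(Y)$''.

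To apply Lemma~\ref{lemma:iotaNatural} to the morphism $f^{\ast} : Y \rto X$, I must show that $Q(f^{\ast} \impl Z)$ has a right adjoint in $\Pos$, and identify that right adjoint with $Q(f \impl Z)$. For this I exploit that $\C$ is a posetal 2-category and that $\impl(\funNB, Z) : \C^{\op} \rto \C$ is a 2-functor. Being contravariant on 1-cells but covariant on 2-cells (i.e., preserving the $\leq$-order on hom-posets), it transports the unit $\id_X \leq f^{\ast} \circ f$ of $f \dashv f^{\ast}$ to $\id_{X \impl Z} \leq (f \impl Z) \circ (f^{\ast} \impl Z)$ and the counit $f \circ f^{\ast} \leq \id_Y$ to $(f^{\ast} \impl Z) \circ (f \impl Z) \leq \id_{Y \impl Z}$. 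These are exactly the unit and counit of an adjunction $f^{\ast} \impl Z \dashv f \impl Z$ in $\C$.

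Since $Q$ is a covariant 2-functor, it preserves this adjunction, yielding $Q(f^{\ast} \impl Z) \dashv Q(f \impl Z)$ in $\Pos$; in particular, $Q(f^{\ast} \impl Z)$ admits a right adjoint, namely $Q(f^{\ast} \impl Z)^{\ast} = Q(f \impl Z)$.

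Invoking Lemma~\ref{lemma:iotaNatural} with $f^{\ast} : Y \rto X$ in the role of the lemma's $f$, and with $\alpha \in Q(Y)$, $\gamma \in Q(Z)$, I obtain
\[
  \iota_{X,Z}(Q(f^{\ast})(\alpha), \gamma) = Q(f^{\ast} \impl Z)^{\ast}(\iota_{Y,Z}(\alpha, \gamma)) = Q(f \impl Z)(\iota_{Y,Z}(\alpha, \gamma))\,,
\]
which is the desired equality. The subtle step is tracking the direction of the induced adjunction through the contravariant $\impl(\funNB, Z)$: because contravariance flips the order of composition while preserving the pointwise ordering on hom-posets, the right adjoint ends up being $f \impl Z$ (and not $f^{\ast} \impl Z$), which is precisely what matches the corollary's right-hand side.
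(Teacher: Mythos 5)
Your proof is correct and takes essentially the same route as the paper's: apply Lemma~\ref{lemma:iotaNatural} to $f^{\ast} : Y \rto X$, observe that the contravariant $\funNB \impl Z$ turns $f \dashv f^{\ast}$ into $f^{\ast} \impl Z \dashv f \impl Z$, and use that $Q$, being a $2$-functor, preserves adjunctions, so that $Q(f^{\ast} \impl Z)^{\ast} = Q(f \impl Z)$. Your preliminary type-checking remark is also right: the hypothesis should read $\alpha \in Q(Y)$, exactly as your application of the lemma (and the paper's) requires.
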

  As from Lemma~\ref{lemma:iotaNatural}, considering that
  $f^{\ast} \impl Z : X \impl Z \rto Y \impl Z$ is left adjoint to
  $f \impl Z : Y \impl Z \rto X \impl Z$, $Q(f^{\ast} \impl Z)$ is
  left adjoint to $Q(f \impl Z)$, since $Q$ preserves adjunctions.
\end{commentout}

\medskip

In the following statement, we fix $0 \in \C$ and $\omega \in Q(0)$,
let $\SX \eqdef X \impl 0$ and $\omega_{X} : Q(X) \rto Q(\SX)^{\op}$ be
defined by $\omega_{X}(\alpha) \eqdef \iota_{X,0}(\alpha,\omega)$. We use
$\fun^{\op} : \SLatt^{op} \rto \SLatt$ to denote the functor sending a
complete lattice to its opposite and a \sp map to its right adjoint.
Equation~\eqref{eq:iotaRA} implies commutativity of the diagram
\begin{center}
  \begin{tikzcd}[ampersand replacement=\&, column sep=18mm]
    Q(X) \ar[swap]{d}{\omega_{X} = \iota_{X,0}(\funNB,\omega)} \ar{r}{Q(f)} \& Q(Y) \ar{d}{\omega_{Y} = \iota_{Y,0}(\funNB,\omega)} \\
    Q(\SX)^{\op} \ar{r}{Q(\Star{f})^{\op}} \& Q(\SY)^{\op}\,.
  \end{tikzcd}
\end{center}
\begin{corollary}
  \label{cor:omegaNatural}
  Suppose that $Q$ monoidally factors through $\SLatt$. 
  Then, considering $Q$ as a functor $\C \rto \SLatt$, $\omega_X$ is a
  natural transformation from $Q(X)$ to $Q(\SX)^{\op}$.
\end{corollary}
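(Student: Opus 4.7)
The commutativity of the naturality square is exactly equation \eqref{eq:iotaRA} of Lemma~\ref{lemma:iotaNatural}, specialized to $Z = 0$ and $\gamma = \omega$: it reads $\iota_{Y,0}(Q(f)(\alpha), \omega) = Q(\Star{f})^{\ast}(\iota_{X,0}(\alpha, \omega))$, i.e.\ $\omega_Y \circ Q(f) = Q(\Star{f})^{\op} \circ \omega_X$ under the convention that $Q(\Star{f})^{\op}$ is the right adjoint of $Q(\Star{f})$ viewed, via the functor $\fun^{\op}$, as a map $Q(\SX)^{\op} \rto Q(\SY)^{\op}$. The hypotheses of Lemma~\ref{lemma:iotaNatural} are satisfied because $Q$ factors through $\SLatt$, so $Q(\Star{f}) = Q(f \impl 0)$ is sup-preserving and therefore admits a right adjoint, and because $\mu$ is natural by the monoidal factorization assumption. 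The target functor $X \mapsto Q(\SX)^{\op}$, $f \mapsto Q(\Star{f})^{\op}$ lands in $\SLatt$: the right adjoint of any sup-preserving map is inf-preserving, and this is precisely a sup-preserving map between the opposite lattices.

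It remains to verify that each component $\omega_X = \iota_{X,0}(-,\omega) : Q(X) \rto Q(\SX)^{\op}$ is itself a morphism in $\SLatt$, i.e.\ that $\iota_{X,0}(-,\omega)$ sends suprema in $Q(X)$ to infima in $Q(\SX)$. The inequality $\iota_{X,0}(\bigvee_i \alpha_i, \omega) \leq \bigwedge_i \iota_{X,0}(\alpha_i, \omega)$ is immediate from antitonicity of $\iota_{X,0}$ in its first argument, which was read off from the construction of the lifting in the proof of Theorem~\ref{thm:lifting}. For the converse, set $\beta \eqdef \bigwedge_i \iota_{X,0}(\alpha_i, \omega)$; by the Galois connection $\brack{\alpha,-}_{X,0} \dashv \iota_{X,0}(\alpha,-)$ supplied by Theorem~\ref{thm:main}, the goal $\beta \leq \iota_{X,0}(\bigvee_i \alpha_i, \omega)$ rewrites as $\brack{\bigvee_i \alpha_i, \beta}_{X,0} \leq \omega$. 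Here the monoidal factorization is crucial: bilinearity of $\mu$ makes $\mu_{X, X \impl 0}(-, \beta)$ sup-preserving, and $Q(\ev_{X,0})$ is sup-preserving because $Q$ lands in $\SLatt$, so $\brack{-, \beta}_{X,0}$ is sup-preserving and equals $\bigvee_i \brack{\alpha_i, \beta}_{X,0}$ when evaluated at $\bigvee_i \alpha_i$. Each summand is bounded by $\omega$ since $\beta \leq \iota_{X,0}(\alpha_i,\omega)$ and the counit of the adjunction yields $\brack{\alpha_i, \iota_{X,0}(\alpha_i, \omega)}_{X,0} \leq \omega$.

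No serious obstacle is anticipated: the argument is essentially an assembly of the Galois connection of Theorem~\ref{thm:main}, the bilinearity of $\mu$ afforded by the monoidal factorization, and the standard bookkeeping of opposite lattices built into the functor $\fun^{\op}$. The only subtle point is distinguishing the two uses of the star in the statement---the right adjoint of a sup-preserving map versus the contravariant functor $X \mapsto X \impl 0$---and tracking that $Q(\Star{f})^{\op}$ really denotes the composite of these two operations applied to $f$.
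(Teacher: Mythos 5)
Your proof takes the same route as the paper: the naturality square is exactly equation~\eqref{eq:iotaRA} of Lemma~\ref{lemma:iotaNatural} instantiated at $Z=0$, $\gamma=\omega$, with the hypotheses of that lemma guaranteed by the monoidal factorization through $\SLatt$ (naturality of $\mu$, and existence of the right adjoint of $Q(\Star{f})$). Your second paragraph, checking that each component $\omega_X$ is itself a morphism of $\SLatt$ (sups to infs) via the Galois connection of Theorem~\ref{thm:main} and bilinearity of $\mu$, is correct and supplies a detail the paper leaves implicit.
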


\begin{commentout}
  \begin{theoremEnd}[category=closed]{lemma}
    Suppose that $\mu$ is natural, $\C$ is a monoidal closed posetal
    $2$-category, $Q$ is a $2$-functor, and $f : Y \rto Z$ is a right
    adjoint. Then
    \begin{align*}
      \iota_{X,Z}(\alpha,Q(f)(\beta)) & = Q(X \impl
      f)(\iota_{X,Y}(\alpha,\beta))\,,
    \end{align*}
    for each $\alpha \in Q(X)$ and $\beta \in Q(Y)$.
  \end{theoremEnd}
  \begin{proofEnd}
    Let $f_{\ast} : Z \rto Y$ be such that $f_{\ast} \dashv f$.  Since
    $Q$ is a 2-functor, $Q(f_{\ast}) \dashv Q(f)$ and
    $Q(X \impl f_{\ast}) \dashv Q(X \impl f)$.Then
    $Q(X \impl f)(\iota_{X,Y}(\alpha,\cdot))$ is right adjoint to
    \begin{align*}
      Q(\ev_{X,Y})(\mu_{X,X \impl Y}(\alpha,Q(X \impl
      f_{\ast})(\cdot))) & = Q(\ev_{X,Y} \circ (X \tensor (X \impl
      f_{\ast})))(\mu_{X,X \impl
        Z}(\alpha,\cdot)) \\
      & = Q(f_{\ast} \circ \ev_{X,Z})(\mu_{X,X \impl
        Z}(\alpha,\cdot)) \\
      & = Q(f_{\ast}) \circ Q(\ev_{X,Z})(\mu_{X,X \impl
        Z}(\alpha,\cdot))
    \end{align*}
    where the last is left adjoint to
    $\iota_{X,Z}(\alpha,Q(f)(\cdot))$.
  \end{proofEnd}
\end{commentout}

\begin{commentout}
  \subsection{Modules}
  Suppose that $\mu$ is natural, so $Q$ is monoidal. Then $Q$ lifts
  the category of monoids in $\C$ to the category of monoids in
  $\Pos$, and restricts to the respective categories of commutative
  monoids. It follows that $Q(I)$ is a commutative ordered monoid,
  whose structure is given by
  \begin{align*}
    & Q(I) \times Q(I) \rto[\mu_{I,I}] Q(I \tensor I) \rto[Q(\rho_{I})
    = Q(\lambda_{I})] Q(I)\,, \intertext{and that $Q(X)$ is a (left or
      right) $Q(I)$-module with action} & Q(I) \times Q(X)
    \rto[\mu_{I,X}] Q(I \tensor X) \rto[Q(\lambda_{X})] Q(X)\,.
  \end{align*}
  It is then immediate to observe that every arrow $f : X \rto Y$
  leads to an equivariant map $Q(f) : Q(X) \rto Q(Y)$.
\end{commentout}

\section{Lifting dualizing objects}
\label{sec:staraut}
We incrementally assume that $\int Q$ and $\C$ are \SMC and that the
projection strictly preserves all the structure.  We
use 
the notation introduced before. Namely, for a fixed object $0$ of a
\SMCC, we let $\SX \eqdef X \impl 0$. If $0$ and $X$ are objects of
$\C$ and $\omega \in Q(0)$, $\omega_{X} : Q(X)^{op} \rto Q(\SX)$ is
defined by $\omega_{X}(\alpha) \eqdef \iota_{X,0}(\alpha,\omega)$.

\smallskip

Let us recall that an object $0$ of a \SMCC $\V$ is \emph{dualizing}
if, for each object $X$ of $\V$, the canonical arrow
\begin{align*}
  j_{X} & : X \rto \SSX \,,
\end{align*}
arising as the transpose the map $\ev_{X,0} \circ \sigma_{\SX,X} : \SX
\tensor X \rto 0$,
is an isomorphism.
A \SMCC $\V$ is \emph{\saut} if it comes with a dualizing object $0$.

\smallskip

We devise in this section conditions for $(0,\omega)$ being a
dualizing object of $\iQ$. Recall from Section \ref{sec:closed} that
$\Star{(X,\alpha)}\eqdef (X,\alpha) \impl (0,\omega) =
(\SX,\omega_X(\alpha))$, and notice that since $\iQ$ is closed the
canonical arrow $j_{(X,\alpha)}$ 
is part of this structure and projects to $j_{X}$. As this might not
be evident, we explicitly record this fact below. 
\begin{theoremEnd}[category=staraut]{lemma}
  For each $X$ object $X$ of $\C$ and each $\alpha \in Q(X)$, the
  canonical arrow $j_X : X \rto \SSX$ lifts to an arrow
  $j_{(X,\alpha)} : (X,\alpha) \rto
  (X,\alpha)^{**}=(X^{**},\omega_{\SX}(\omega_X(\alpha)))$ in
  $\int Q$. That is, we have the inequality 
  $Q(j_X)(\alpha) \leq \omega_{\SX}(\omega_X(\alpha))$.
\end{theoremEnd}
\begin{proofEnd}
  Writing
  $j_X=(\SX \impl (\ev_{X,0} \circ \sigma)) \circ \eta_{\SX,X}$, that
  is general formula for transpose of an adjunction, we have for
  $\alpha \in Q(X)$
  \begin{align*}
    Q(j_X)(\alpha) & = Q((\SX \impl (\ev_{X,0} \circ \sigma)) \circ
    \eta_{\SX,X})(\alpha)
    \\
    & \leq Q(\SX \impl (\ev_{X,0} \circ \sigma))(\iota_{\SX,\SX\tensor
      X}(x,\mu_{\SX,X}(x,\alpha))) \tag*{ for all
      $x$, by functoriality and the inclusion~\eqref{eq:unit0}}
    \\
    & \leq \iota_{\SX,0}(x,Q(\ev_{X,0} \circ
    \sigma)(\mu_{\SX,X}(x,\alpha))) \tag*{by
      diagram~\eqref{eq:laxImpl}}
    \\
    & = \iota_{\SX,0}(x,Q(\ev_{X,0})(\mu_{X,\SX}(\alpha,x))) \tag*{by
      functoriality and figure~\eqref{fig:monoidalFunctor}}
    \\
    & \leq \iota_{\SX,0}(\iota_{X,0}(\alpha,\omega),\omega) \tag*{with
      $x=\iota_{X,0}(\alpha,\omega)$ and equation~\eqref{eq:counit0}}
    \\
    &= \omega_{\SX}(\omega_X(\alpha)) \,.
    \tag*{\qedhere}
  \end{align*}
\end{proofEnd}
We study next under which conditions the maps $j_{(X,\alpha)}$ are
isomorphisms.
Using Lemma~\ref{lemma:invertible} and assuming that
$Q(X) \neq \emptyset$, for each object $X$ of $\C$, we infer the
following:

\begin{theoremEnd}[category=staraut]{proposition}
  \label{prop:dualizing}
  An object $(0,\omega)$ of $\iQ$ is dualizing if and only if $0$ is
  a dualizing object of $\C$ and
  , for each object $X$ of $\C$, the
  following diagram commutes.
  \begin{equation}
    \label{diag:staraut}
    \begin{tikzcd}[column sep=20mm,ampersand replacement=\&]
      Q(X) \ar[swap]{rd}{Q(j_{X})}\ar{r}{\omega_{X}} \& Q(\SX)^{op}
      \ar{d}{\omega_{\SX}} \\
      \& Q(\SSX)
    \end{tikzcd}
  \end{equation}
\end{theoremEnd}
\begin{proofEnd}
  Asking for the object $(0,\omega)$ in $\int Q$ to be dualizing means
  that for each object $X$ of $\C$ and $\alpha \in Q(X)$, the arrow
  $j_{(X,\alpha)} : (X,\alpha) \rto (X,\alpha)\Star{}\Star{}$ is
  invertible. By Lemma~\ref{lemma:invertible} and since we can always
  find $\alpha \in Q(X)$, this is equivalent to asking $j_X$ to be
  invertible, for each object $X$ of $\C$, and that
  $Q(j_X)(\alpha) = \omega_{\SX}(\omega_{X}(\alpha))$, for each
  object $X$ of $\C$ and $\alpha \in Q(X)$.
\end{proofEnd}

\begin{theoremEnd}[category=staraut]{lemma}
  \label{lemma:easyDirect}
  If $(0,\omega)$ is dualizing, then, for each object $X$ of $\C$,
  $\omega_{X} : Q(X)^{op} \rto Q(\SX)$ is an isomorphism.
\end{theoremEnd}
\begin{proofEnd}
  In diagram~\eqref{diag:staraut}, since $Q(j_{X})$ is invertible,
  $\omega_{X}$ is split mono, and $\omega_{\SX}$ is split epi, for
  each object $X$. Thus, $\omega_{\SX}$ is both split epi and split
  mono and therefore it is invertible. It follows that $\omega_{X}$ is
  inverted by $Q(j_{X})^{-1} \circ \omega_{\SX} $.
\end{proofEnd}

It might be difficult to directly make use of
Proposition~\ref{prop:dualizing}. The following
Theorem~\ref{thm:omegaInv} yields a more direct way to verify whether
some $(0,\omega)$ is dualizing by simply looking at the maps
$\omega_X$.  Recall from Theorem~\ref{thm:main} that any map of the
form $\brack{\alpha,\funNB}_{X,Y}$ has a right adjoint, namely
$\iota_{X,Y}(\alpha,\funNB)$. In the case $Y = 0$, let us simply write
$\brack{\alpha,\funNB}_{X}$ for $\brack{\alpha,\funNB}_{X,0}$, so to 
have the equivalence:
\begin{align*}
  \langle \alpha,\beta\rangle_{X}
  \leq \omega \Tiff \beta \leq \omega_{X}(\alpha)\,,\quad \text{for
    all 
    $\beta \in Q(\SX)$}\,.
\end{align*}
Suppose now that for each $\beta \in Q(\SX)$ we can find
$\Lneg{\beta} \in Q(X)$ such that
\begin{align*}
  \alpha & \leq \Lneg{\beta} \Tiff \langle \alpha,\beta\rangle_{X}
  \leq \omega \,,\quad \text{for
    all $\alpha\in Q(X)$}\,.
\end{align*}
This happens, for example, when $\brack{\funNB,\beta}_{X}$ has a right
adjoint and, in particular, when $Q$ monoidally factors through
$\SLatt$. Then,
the equivalences
\begin{align}
  \label{eq:Gconnection}
  \alpha & \leq \Lneg{\beta} \Tiff \langle \alpha,\beta\rangle_{X}
  \leq \omega \Tiff \beta \leq \omega_{X}(\alpha)\,,\quad \text{for
    all $\alpha\in Q(X)$ and $\beta \in Q(\SX)$}\,,
\end{align}
hold and exhibit $\omega_{X}$ and $\Lneg{\fun}$ as a Galois
connection. In particular $\Lneg{\beta}$ is uniquely determined.

\begin{theoremEnd}[category=staraut]{lemma}
  \label{lemma:omegainv}
  If 
  $\omega_{X} : Q(X)^{op} \rto Q(\SX)$ is invertible, then
  $\Lneg{\beta} = \omega^{-1}_{X} (\beta)$.
\end{theoremEnd}
\begin{proofEnd}
  We have $ \langle \alpha,\beta \rangle_{X} \leq \omega$ if and only
  if $\beta \leq \omega_{X}(\alpha)$, if and only if
  $\alpha \leq \omega_{X}^{-1}(\beta)$.
\end{proofEnd}

\begin{theoremEnd}[category=staraut]{lemma}
  \label{lemma:pairingTwist}
  If $\mu$ is natural, then  the following
  relation holds
  \begin{align}
    \label{eq:pairingTwist}
    \langle \alpha, \beta\rangle_{X} & = \langle \beta, Q(j_{X})(\alpha)\rangle_{\SX}\,,
  \end{align}
  for each $\alpha \in Q(X)$ and $\beta \in Q(\SX)$.  Consequently:
  \begin{enumerate}
  \item If
    $Q(j_{X})$ has a \ra $Q(j_{X})^{\ast}$, then $\Lneg{\beta} = Q(j_{X})^{\ast}(\omega_{\SX}(\beta))$.
  \item    If
    $j_{X}$ is invertible, then $\Lneg{\beta} = Q(j_{X}^{-1})(\omega_{\SX}(\beta))$.
  \end{enumerate}
\end{theoremEnd}
\begin{proofEnd}
  We obtain \eqref{eq:pairingTwist} by diagram chasing:
  \begin{center}
    \begin{tikzcd}[column sep=15mm,ampersand replacement=\&]
      Q(X) \times Q(\SX) \ar{d}{\mu_{X,\SX}}\ar{r}{\sigma_{X,
        \SX}} \&  Q(\SX) \times
      Q(X) \ar{d}{\mu_{\SX,X}}\ar{r}{\id \times Q(j_{X})} \& Q(\SX) \times Q(\SSX) \ar{d}{\mu_{\SX,\SSX}}\\
      Q(X \tensor \SX) \ar[swap]{rd}{Q(\ev_{X,0})}\ar{r}{Q(\sigma_{X,\SX})} \& Q(\SX \tensor
      X) \ar{r}{Q(\id\tensor j_{X})} \& Q(\SX \tensor \SSX) \ar{ld}{Q(\ev_{\SX,0})}\\
      \& 
      Q(0)
    \end{tikzcd}
  \end{center}
  Using \eqref{eq:pairingTwist}, we argue as follows:
  \begin{align*}
    \langle \alpha, \beta\rangle_{X} \leq \omega
    & \Tiff \langle \beta,Q(j_{X})(\alpha)\rangle_{X} \leq \omega \\
    &  \Tiff Q(j_{X})(\alpha) \leq \omega_{\SX}(\beta) \\
    & \Tiff \alpha \leq Q(j_{X})^{\ast}(\omega_{\SX}(\beta)) =
    Q(j_{X}^{-1})(\,\omega_{\SX}(\beta)\,)\,,
  \end{align*}
  where the equality
  $Q(j_{X})^{\ast}(\omega_{\SX}(\beta)) =
  Q(j_{X}^{-1})(\omega_{\SX}(\beta))$ holds when $j_{X}$ is
  invertible.
\end{proofEnd}

\begin{theoremEnd}[category=staraut]{theorem}
  \label{thm:omegaInv}
  If $\mu$ is natural, then, for each $\omega \in Q(0)$, $(0,\omega)$
  is dualizing if and only if $0$ is dualizing in $\C$ and, for each
  object $X$ of $\C$, $\omega_{X}$ is invertible.
\end{theoremEnd}
\begin{proofEnd}
  We only need to prove the converse of
  Lemma~\ref{lemma:easyDirect}, namely that
  diagram~\eqref{diag:staraut} commutes.
  \newline
  By Lemmas~\ref{lemma:omegainv} and \ref{lemma:pairingTwist}, we have
  $Q(j_{X}^{-1})(\omega_{\SX}(\beta)) = \Lneg{\beta} =
  \omega_{X}^{-1}(\beta)$, for each $\beta \in Q(\SX)$, thus
  $Q(j_{X}^{-1})\circ \omega_{\SX} = \omega_{X}^{-1}$, from which
  commutativity of \eqref{diag:staraut} follows.
\end{proofEnd}

\section{The double negation nucleus}
\label{sec:doubleNegation}

It is well known, see for instance \cite[Theorem~1]{Rosenthal1990a},
that given an element $\omega$ of a unital commutative quantale
$(Q,e,\qmult)$, the map $(\funNB\impl \omega)\impl \omega$ is a closure
operator, actually a nucleus. Taking the fixed points of this nucleus
yields a quantale $(Q_j,j(e),\qmult_j)$ in which every element is
equal to its double negation. Otherwise said, $(Q_j,j(e),\qmult_j)$ is
a \saut complete poset, also called a Girard quantale. In this section
we show how to reproduce this process with the total category in place
of the quantale.

We suppose from now on that $\C$ is \saut with $0$ as dualizing
element, and that $Q : \C \rto \SLatt$ is a monoidal functor, so that,
in particular, $\mu$ is natural. Thus, as stated in
Theorem~\ref{thm:SLattClosed}, $\iQ$ is \SMC. 
We fix $\omega \in Q(0)$. 

\begin{theoremEnd}[category=doubleNegation]{proposition}
  \label{prop:Qj}
  Define, for each $X \in \C$ and $\alpha \in Q(X)$,
  $\j_{X}(\alpha) \eqdef \Lneg{(\omega_{X}(\alpha))}$. Then $\j_{X}$
  is a closure operator on $Q(X)$. 
  For $X$ an object of $\C$ and $f : X \rto Y$ an arrow of $\C$, let
  \begin{align*}
    \Qj(X) & \eqdef \set{\alpha \in Q(X) \mid \j_{X}(\alpha) =
      \alpha}\,,
    & \Qj(f) & \eqdef \j_{Y} \circ Q(f)\,.
  \end{align*}
  Then 
  $\Qj : \C \rto \SLatt$ is a functor 
  and $\j_{X} : Q(X) \rto \Qj(X)$ is a surjective \sp map, natural in
  $X$.  If $f : X \rto Y$ is invertible, then $Q(f)$ restricts to a
  map from $\Qj(X) \rto \Qj(Y)$.
\end{theoremEnd}
\begin{proofEnd}
  In view of the relations in~\eqref{eq:Gconnection}, $\Lneg{(\cdot)}$
  and $\omega_{X}$ form a Galois connection and, by general theory,
  see e.g. \cite[Chapter 7]{DP}, their composition $\j_{X}$ is a
  closure operator on $Q(X)$. Then, $\Qj(X)$ is a complete lattice
  where the supremum of a family $\set{\alpha_{i} \mid i \in I}$ is
  $\j_{X}(\bigvee_{i} \alpha_{i})$ (we shall denote this supremum by
  $\supj_{i \in I}\alpha_{i}$).  Moreover
  $\j_{X} : Q(X) \rto \Qj(X)$ is \sp and surjective. 
  \newline
  It is obvious that $\Qj$ preserves the identities. The fact that it
  preserves the composition and the naturality of $\j$ are directly
  induced by the equality
  \begin{align}
    \label{eq:quotient}
    \j_{Y}(Q(f)(\j_{X}(\alpha))) & = \j_{Y}(Q(f)(\alpha))\,,
  \end{align}
  valid for each $\alpha \in Q(X)$.  This equality is proved by the
  following argument.  By Corollary~\ref{cor:omegaNatural}, for
  $f : X \rto Y$ in $\C$, the following diagram commutes:
  \begin{equation}
    \label{diag:omegaNatural}
    \begin{tikzcd}[ampersand replacement=\&]
      Q(X) \ar{d}{Q(f)}\ar{r}{\omega_{X}} \& Q(\SX)^{\op}  \ar{d}{Q(\Star{f})^{\op}}\\
      Q(Y) \ar{r}{\omega_{Y}} \& Q(\SY)^{\op} \,.
    \end{tikzcd}
  \end{equation}
  Recall now that if a diagram in \Pos
  \begin{equation}
    \label{diag:omegaNatural}
    \nonumber
    \begin{tikzcd}[ampersand replacement=\&]
      A  \ar{d}{h} \ar{r}{f} \& B  \ar{d}{g}\\
      C \ar{r}{k} \& D \,.
    \end{tikzcd}
  \end{equation}
  commutes and is such that $f$ and $k$ have right adjoints $\rho(f)$ and $\rho(k)$,
  respectively, then
  \begin{align}
    \label{eq:antidiagRadjs}
    h(\rho(f)(b)) & \leq \rho(k)(g(b))\,,
  \end{align}
  holds for each $b \in B$. The Galois connection expressed
  in~\eqref{eq:Gconnection} can also be understood by saying that
  $\Lneg{\fun} : Q(\SX)^{\op} \rto Q(X)$ is right adjoint to
  $\omega_{X} : Q(X) \rto Q(\SX)^{\op}$.
  \newline
  Thus, applying \eqref{eq:antidiagRadjs} to diagram
  \eqref{diag:omegaNatural}, we obtain
  \begin{align*}
    Q(f)(\Lneg{\beta}) & \leq \Lneg{(Q(\Star{f})^{\op}(\beta))} 
  \end{align*}
  for each $\beta \in Q(\SX)$.
  And therefore, by the naturality of $\omega$, 
  \begin{align}
    Q(f)(\j_{X}(\alpha)) &  = Q(f)(\Lneg{(\omega_{X}(\alpha))})
    \nonumber \\
    & \leq\Lneg{( (Q(\Star{f})^{\op})(\omega_{X}(\alpha)))}
    = \Lneg{(\omega_{Y}(Q(f)(\alpha)))} = \j_{Y}(Q(f)(\alpha))\,.
    \label{eq:jQLaxNatural}
  \end{align}
  As $\j_Y$ is a closure operator, we also obtain that 
  \begin{align*}
    \j_{Y}(Q(f)(\alpha)) & \leq  \j_{Y}(Q(f)(\j_{X}(\alpha)))
    \leq  \j_{Y}(\j_{Y}(Q(f)(\alpha)) =  \j_{Y}(Q(f)(\alpha))\,.
  \end{align*}
  That is, we have derived \eqref{eq:quotient}.
  This equality also allows us to prove that $\Qj(f)$ is sup-preserving:
  \begin{align*}
    \Qj(f)(\,\supj_{i} \alpha_{i}\,) & =
    \j_{Y}(Q(f)(\,\j_{X}(\bigvee_{i} \alpha_{i})\,))]
    = \j_{Y}(Q(f)(\bigvee_{i} \alpha_{i})) = \j_{Y}(\bigvee_{i}
    Q(f)(\alpha_{i})) \\
    & 
    \leq \j_{Y}(\bigvee_{i}
    \j_{Y}(Q(f)(\alpha_{i}))) = \supj_{i} \Qj(f)(\alpha_{i})\,,
  \end{align*}
  where the opposite inclusion follows from monotonicity of $\Qj(f)$.
  Finally, suppose $f : X \rto Y$ invertible. We aim at showing that
  $\j_{Y}(Q(f)(\alpha)) \leq Q(f)(\j_{X}(\alpha))$ (which then is an
  equality by equation~\eqref{eq:jQLaxNatural}).
  Indeed, we have
  \begin{align*}
    Q(f^{-1})(\j_{Y}(Q(f)(\alpha))) & \leq
    \j_{X}(Q(f^{-1})(Q(f)(\alpha))) = \j_{X}(\alpha)\,,
  \end{align*}
  and we obtain the desired inequality by applying $Q(f)$.
\end{proofEnd}

\medskip

Before arguing that $\Qj : \C \rto \SLatt$ is monoidal, we first give
a variant of Theorem~\ref{thm:omegaInv}, based on 
closure operators $\j_{X}$.
\begin{theoremEnd}[category=doubleNegation]{proposition}
  \label{prop:starautFromFixpoints}
  $(0,\omega)$ is a dualizing element of $\int Q$ if and only if $0$
  is a dualizing element of $\C$ and, for each object $X$ of $\C$ and
  each $\alpha \in Q(X)$, $\j_{X}(\alpha) =\alpha$.
\end{theoremEnd}
\begin{proofEnd}
  Recall that we assume that $\mu$ is natural. Thus, we rely on
  Theorem~\ref{thm:omegaInv} and argue that $\omega_{X}$ is invertible
  if and only if $\Qj(X) = Q(X)$, for each object $X$ of $\C$.
  \newline
  If $\omega_{X}$ is invertible, then it is inverted by its adjoint
  $\Lneg{\funNB}$, thus $\j_{X}(\alpha) = \alpha$ for each $\alpha \in
  Q(X)$.
  \newline
  Conversely, suppose that, for each object $X$ of $\C$,
  $\Qj(X) = Q(X)$. This amounts to saying that
  $\id_{Q(X)} = \j_{X} = Q(j_{X}^{-1}) \circ \omega_{\SX} \circ
  \omega_{X}$. Therefore, $\omega_{X}$ is split mono, and
  $\omega_{\SX}$ is split epi, for each for each object $X$ of
  $\C$. Therefore $\omega_{\SX}$ is both split mono and split epi,
  and therefore it is an an isomorphism. It follows that $\omega_{X}$ is
  an isomorphism as well.
\end{proofEnd}

\textEnd[category=doubleNegation]{
Let us develop a few remarks.  In the following, let
$\psi : (X \tensor Y \impl Z) \rto Y \impl (X \impl Z)$ be the
(natural) isomorphism determined by commutativity of
\begin{center}
  \begin{tikzcd}[column sep=20mm]
    X \tensor Y \tensor ((X \tensor Y) \impl Z) \ar[swap]{ddr}{\ev_{X
        \tensor Y,Z}}
    \ar{r}{X \tensor Y \tensor \psi}
    &     X \tensor Y \tensor (Y \impl (X \impl Z)) \ar{d}{X \tensor
      \ev_{Y, X \impl Z}} \\
    & X \tensor (X \impl Z) \ar{d}{\ev_{X,Z}} \\
    & Z
  \end{tikzcd}
\end{center}
}
\begin{theoremEnd}[category=doubleNegation]{lemma}
  The following relation holds:
  \begin{align}
    \label{eq:doubleImpl}    
    \langle\,\mu_{X,Y}(\alpha,\beta),\,\gamma\rangle_{X\tensor Y,Z}
    & = \langle \,\alpha,\,\langle \beta,Q(\psi)(\gamma)\,\rangle_{Y, X
      \impl Z}\,\rangle_{X,Z}\,,
  \end{align}
  for each $\alpha \in Q(X)$, $\beta \in Q(Y)$, and
  $\gamma \in Q((X \tensor Y)\impl Z)$.
\end{theoremEnd}
\begin{proofEnd}
  Recall that we assume in this Section that $\mu$ is natural.
  Equation~\eqref{eq:doubleImpl} is argued via the diagram chasing
  in Figure~\ref{fig:profAAA}.
  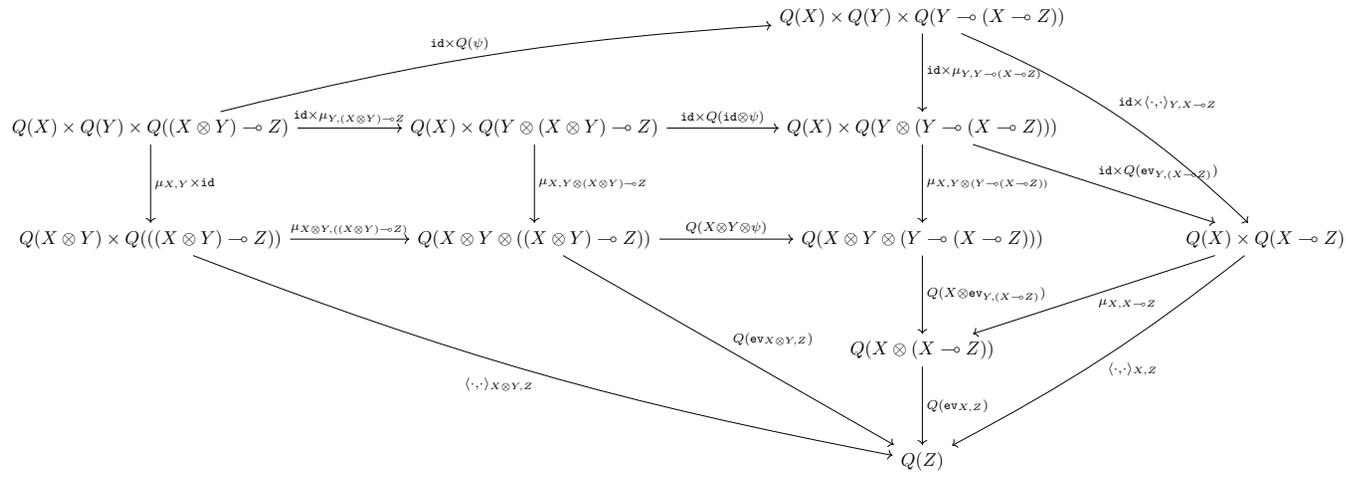
\begin{figure}
    \centering
    \adjustbox{rotate=-90,scale=0.70}{
      \begin{tikzcd}[column sep=20mm, row sep=15mm, ampersand replacement=\&]
      \& \&  Q(X) \times Q(Y)
      \times Q(Y \impl
      (X \impl Z)) 
      \ar{d}{\id\times\mu_{Y,Y \impl (X \impl Z)}}
      \ar[bend left=10]{rdd}{\id \times \langle \cdot, \cdot\rangle_{Y,X
        \impl Z}}
      \\ 
      Q(X) \times Q(Y) \times Q((X \tensor Y) \impl Z)
      \ar{d}{\mu_{X,Y} \times \id} \ar{r}{\id \times \mu_{Y,(X
            \tensor Y) \impl Z}} \ar[bend left=5]{rru}{\id \times Q(\psi)} \&
      Q(X) \times Q(Y \tensor (X \tensor Y) \impl Z) \ar{d}{\mu_{X, Y
          \tensor (X \tensor Y)\impl Z}} \ar{r}{\id \times Q(\id
        \tensor \psi)}\& Q(X) \times Q(Y \tensor (Y \impl (X \impl Z)))
      \ar{d}{\mu_{X,Y \tensor (Y \impl (X \impl Z))}} \ar{rd}{\id \times
        Q(\ev_{Y,(X \impl Z)})} \\ 
      Q(X \tensor Y) \times Q(((X \tensor Y) \impl Z)) \ar{r}{\mu_{X
          \tensor Y, ((X \tensor Y) \impl Z)}} \ar[swap,bend
      right=5]{rrdd}{\langle \cdot,\cdot\rangle_{X \tensor Y,Z}} \& Q(X
      \tensor Y \tensor ((X \tensor Y) \impl Z)) \ar{rdd}{Q(\ev_{X\tensor
          Y,Z})} \ar{r}{Q(X\tensor Y \tensor \psi) } \& Q(X\tensor Y \tensor (Y \impl (X \impl Z))) \ar{d}{Q(X
        \tensor \ev_{Y,(X \impl Z)})}  \&Q(X)
      \times Q(X \impl Z)
      \ar{ld}{\mu_{X,X \impl Z}} \ar[bend left=5]{ldd}{\langle \cdot,\cdot\rangle_{X,Z}}\\ 
       \&\&  Q(X \tensor (X \impl Z))\ar{d}{Q(\ev_{X,Z})}\\
       \&\&Q(Z)
    \end{tikzcd}
  }
    \caption{Proof of equation~\eqref{eq:doubleImpl}    
}
    \label{fig:profAAA}
  \end{figure}
\end{proofEnd}

\medskip

Recall that the double negation closure operator
$j\fun \eqdef (\funNB \impl \omega) \impl \omega$ on a \ucq
$(Q,e,\qmult)$ is a \emph{nucleus}, meaning that it satisfies the
following inclusion:
\begin{align*}
  j(j(\alpha) \qmult \beta) & \leq j(\alpha \qmult \beta)\,.
\end{align*}
The above inclusion is the key property allowing to construct a
quantale structure on the set $Q_{j}$ of fixed point of $j$. We show
that a similar property allows to transform $\Qj$ into a monoidal
functor.
\begin{theoremEnd}[category=doubleNegation]{proposition}
  The collection of maps $\j_{X}$ forms a nucleus w.r.t. $\mu$. That
  is, the following inequation holds:
  \begin{align}
    \label{eq:firstgoal}
    \mu_{X,Y}(\,\j_{X}(\alpha),\,\beta\,) & \leq \j_{X \tensor
      Y}(\,\mu_{X,Y}(\,\alpha,\,\beta\,)\,)\,,
  \end{align}
  for each $\alpha \in Q(X)$ and $\beta \in Q(Y)$. Consequently, by
  defining 
  $\uj \eqdef \j_{I}(u)$ and
  \begin{align*}
    \muj_{X,Y} & \eqdef \Qj(X) \times \Qj(Y) \rto[\mu_{X,Y}] Q(X
    \tensor Y) \rto[\j_{X \tensor Y}] \Qj(X \tensor Y)\,,
  \end{align*}
  these maps turn $\Qj : \C \rto \SLatt$ into a monoidal functor.
\end{theoremEnd}
\begin{proofEnd}
  Assume~\eqref{eq:firstgoal}.
  Then, by symmetry, we also have
  $$\mu_{X,Y}(\alpha,\j_{Y}(\beta)) \leq \j_{X \tensor
    Y}(\mu_{X,Y}(\alpha,\beta))\,.$$ Moreover, by monotonicity,
  $ \j_{X \tensor Y}(\mu_{X,Y}(\alpha,\beta)) \leq\j_{X \tensor
    Y}(\mu_{X,Y}(\j_{X}(\alpha),\j_{Y}(\beta))) $.  Altogether we
  deduce:
  \begin{align}
    \label{eq:maingoal}
    \j_{X \tensor Y}(\mu_{X,Y}(\j_{X}(\alpha),\j_{Y}(\beta))) & =
    \j_{X \tensor Y}(\mu_{X,Y}(\alpha,\beta))\,.
  \end{align}
  From \eqref{eq:maingoal} it immediately follows that $\muj$ is
  bilinear (as we assume that $\mu$ is bilinear) and that it makes
  commutative the diagrams required for monoidality, see
  Figure~\eqref{fig:monoidalFunctor}. Also, \eqref{eq:maingoal} yields
  naturality of $\muj$:
  \begin{align*}
    \Qj(f \tensor g)(\muj_{X,Y}(\alpha,\beta))
    & = \j_{X' \tesnor Y'}(Q(f \tensor
    g)(\j_{X,Y}(\mu_{X,Y}(\alpha,\beta)))) \\
    & = \j_{X' \tesnor Y'}(Q(f \tensor
    g)(\mu_{X,Y}(\alpha,\beta)))\,, 
    \tag*{by equation~\eqref{eq:quotient},}\\
    & = \j_{X' \tesnor Y'}(\mu_{X',Y'}(Q(f)(\alpha),Q(g)(\beta))),
    \tag*{since $\mu$ is natural,}\\
    & = \j_{X' \tesnor
      Y'}(\mu_{X',Y'}(\j_{X'}(Q(f)(\alpha)),\j_{Y'}(Q(g)(\beta))))\,,
    \tag*{by \eqref{eq:maingoal},}
    \\
    & = \muj_{X',Y'}(\Qj(f)(\alpha),\Qj(g)(\beta))\,.
  \end{align*}
  \newline
  Let us tackle the proof of relation \eqref{eq:firstgoal}, which is equivalent to
  \begin{align*}
    \langle
    \,\mu_{X,Y}(\j_{X}(\alpha),\beta),\,\omega_{X}(\mu_{X,Y}(\alpha,\beta))\,\rangle_{X
      \tensor Y} & \leq \omega\,.
  \end{align*}
  We derive the latter relation by means of the folllowing
  computations:
  \begin{align*}
    \langle
    \,\mu_{X,Y}(\j_{X}(\alpha),\beta),\,&\omega_{X}(\mu_{X,Y}(\alpha,\beta))\,\rangle_{X
      \tensor Y,0}\\
    & = \langle \j_{X}(\alpha), \langle \beta
    ,\,Q(\psi)(\omega_{X \tensor Y}(\mu_{X,Y}(\alpha,\beta)))\,\rangle_{Y,\SX}
    \rangle_{X,0}\,,
    \tag*{by equation \eqref{eq:doubleImpl},}
    \\
    & = \langle \j_{X}(\alpha), \langle \beta
    ,\iota_{Y,X^*}(\beta,\omega_{X}(\alpha))\rangle_{Y,\SX}
    \rangle_{X,0}\,,
    \tag*{by equation \eqref{eq:goal}, to be argued for,}
    \\
    \tag*{by equation \eqref{eq:counit0},}
    & \leq \langle \j_{X}(\alpha), \omega_{X}(\alpha) \rangle_{X,0}\,, \\
    & = \langle \Lneg{(\omega_{X}(\alpha))}, \omega_{X}(\alpha) \rangle_{X,0} \leq
    \omega\,.
  \end{align*}
  To complete the proof, we need to asses the equality
  \begin{align}
    \label{eq:goal}
    Q(\psi)(\omega_{X \tensor Y}(\mu_{X,Y}(\alpha,\beta))) & =
    \iota_{Y,X^*}(\beta,\omega_{X}(\alpha))\,.
  \end{align}
  The canonical isomorphism
  $\psi : (X\tensor Y) \impl Z \rto Y \impl (X \impl Z)$ lifts from
  $\C$ to $\iQ$, which means that the following diagram
  commutes:
  \begin{center}
    \begin{tikzcd}[column sep=30mm, ampersand replacement=\&]
      Q(X)^{op} \times Q(Y)^{op} \times Q(Z) \ar{d}{\mu_{X,Y}^{op}
        \times Q(Z)} \ar{r}{\sigma_{Q(X)^{op}, Q(Y)^{op}} \times Q(Z)}
      \& Q(Y)^{op} \times Q(X)^{op} \times Q(Z) \ar{d}{Q(Y)^{op}
        \times \iota_{X,Z}}
      \\
      Q(X \tensor Y)^{op} \times Q(Z) \ar{d}{\iota_{X\tensor Y, Z}}
      \& Q(Y)^{op} \times Q(X \impl Z) \ar{d}{\iota_{Y, X \impl Z}}\\
      Q((X \tensor Y) \impl Z) \ar{r}{Q(\psi)} \& Q(Y \impl (X \impl
      Z) )
      \end{tikzcd}
  \end{center}
  Equation~\eqref{eq:goal} is just an instance of this commutativity, with $Z = 0$.
\end{proofEnd}
\medskip
Next, the important feature of the \ucq $Q_{j}$ is that the
implication in $Q_{j}$ is computed exactly as in $Q$ and, moreover,
that $\omega \in Q_{j}$. We collect analogous remarks for the functor
$\Qj$.
\begin{theoremEnd}[category=doubleNegation]{lemma}
  \label{lemma:ImplLaxClosed}
  \mbox{\hspace{1mm}}\newline
  \begin{enumerate}
  \item The following inclusion holds:
    \begin{align}
      \label{eq:ImplLax}
      \j_{X \impl Y}(\iota_{X,Y}(\alpha,\beta)) & \leq
      \iota_{X,Y}(\j_{X}(\alpha),\j_{Y}(\beta))\,.
    \end{align}
    Consequently, if $\beta \in \Qj(Y)$, then
    $\iota_{X,Y}(\alpha,\beta) \in \Qj(X \impl Y)$, for each
    $\alpha \in Q(X)$.
\item    Every element of the form $\omega_{X}(\alpha)$, $\alpha \in Q(X)$,
  belongs to $\Qj(\SX)$.
\item   The element $\omega \in Q(0)$ is a fixed point of $\j_{0}$.
  \end{enumerate}
\end{theoremEnd}
\begin{proofEnd}
  \begin{enumerate}
  \item 
  We have
  \begin{align*}
    \langle \j_{X}(\alpha) , \j_{X \impl
      Y}(\iota_{X,Y}(\alpha,\beta))\rangle_{X,Y}
    & = Q(\ev_{X,Y})(\mu_{X,X\impl Y}(\j_{X}(\alpha),\j_{X \impl
      Y}(\iota_{X,Y}(\alpha,\beta)))) \\
    & \leq Q(\ev_{X,Y})(\j_{X \tensor (X \impl Y)}(\mu_{X,X\impl
      Y}(\alpha,\iota_{X,Y}(\alpha,\beta))))
    \tag*{using equation \eqref{eq:firstgoal}}
    \\
    & \leq \j_{Y}(Q(\ev_{X,Y})(\mu_{X,X\impl
      Y}(\alpha,\iota_{X,Y}(\alpha,\beta))))
    \tag*{by equation \eqref{eq:jQLaxNatural}}
    \\ 
    & \leq \j_{Y}(\beta)\,, \tag*{by equation~\eqref{eq:counit0}.}
  \end{align*}
  From the above inequality, equation \eqref{eq:ImplLax} follows using
  adjointness.  Next, assume $\beta \in \Qj(Y)$, let
  $\alpha \in Q(X)$, and observe that
  \begin{align*}
    \iota_{X,Y}(\alpha,\beta) & \leq \j_{X \impl
      Y}(\iota_{X,Y}(\alpha,\beta))\\
    & \leq \iota_{X,Y}(\j_{X}(\alpha),\j_{Y}(\beta))\,, \tag*{by
      \eqref{eq:ImplLax},}
    \\
    & = \iota_{X,Y}(\j_{X}(\alpha),\beta)\,,
    \tag*{since $\beta \in \Qj(Y)$,}\\
    & \leq \iota_{X,Y}(\alpha,\beta) \,, \tag*{since
      $\iota$ reverses the order in its first variable,}
  \end{align*}
  and therefore
  $\iota_{X,Y}(\alpha,\beta) = \j_{X \impl
    Y}(\iota_{X,Y}(\alpha,\beta))$.
  \item
  By Corollary~\ref{cor:omegaNatural}, the following diagram commutes:
  \begin{center}
    \begin{tikzcd}[ampersand replacement=\&]
      Q(X) \ar{d}{Q(j_{X})} \ar{r}{\omega_{X}} \& Q(\SX)^{op}
      \ar{d}{Q(\Star{(j_{X})})^{op} \,=\, Q(\Star{(j_{X})})^{-1} \,=\, Q(\Star{(j_{X})}{}^{-1})  \,=\, Q(j_{\SX})} \\
      Q(\SSX) \ar{r}{\omega_{\SSX}} \& Q(\SSSX)^{op} 
    \end{tikzcd}
  \end{center}
  where the equality $\Star{(j_{X})}{}^{-1} = j_{\SX}$ follows from
  the relation $\Star{(j_{X})}\circ j_{\SX} = \id_{\SX}$, valid in
  every \SMCC, and since $\Star{(j_{X})}$ is invertible whence mono.
  Using this observation, we can compute as follows:
  \begin{align*}
    \j_{\SX}\circ \omega_{X} & = Q(j_{\SX}^{-1}) \circ \omega_{\SSX}
    \circ \omega_{\SX} \circ \omega_{X}
    =  \omega_{X }\circ Q(j_{X}^{-1})\circ \omega_{\SX} \circ
    \omega_{X}
    = \omega_{X} \circ \j_{X} = \omega_{X}\,.
  \end{align*}
\item Since $\mu$ is monoidal, $Q(I)$ is a monoid in $\SLatt$ (that is, a
  quantale) and, for each object $X$ of $\C$, $Q(X)$ is a
  $Q(I)$-module. For $q \in Q(I)$ and $\alpha \in Q(X)$, the action is
  given by $q \cdot \alpha = Q(\lambda_{X})(\mu_{I,X}(q,\alpha))$.
  \newline
  Let $\lambda_{0}^{\sharp} : 0 \rto I \impl 0$ the transpose of
  $\lambda_{0} : I \tensor 0 \rto 0$ and recall $\lambda_{0}^{\sharp}$
  is an isomorphsm.
  Commutativity of the following diagram
  \begin{center}
    \begin{tikzcd}[column sep=20mm, ampersand replacement=\&]
      Q(I) \times Q(0) \ar{dd}{\mu_{I,0}}\ar{r}{Q(I) \times Q(\lambda_{0}^{\sharp})} \& Q(I) \times
      Q(I\impl 0) \ar{d}{\mu_{I,I \impl 0}}\\
       \& Q(I \tensor (I \impl 0)) \ar{d}{Q(\ev_{I,0})}\\
      Q(I \tesnor 0) \ar{r}{Q(\lambda_{0})} \& Q(0)
    \end{tikzcd}
  \end{center}
  yields the relation
  \begin{align*}
    q \cdot \alpha & = \brack{q,Q(\lambda_{0}^{\sharp})(\alpha)}_{I,0}\,.
  \end{align*}
  Let next $e$ be the unit of $Q(I)$ and compute as follows:
  \begin{align*}
    Q(\lambda_{0}^{\sharp})(\omega) & =
    Q(\lambda_{0}^{\sharp})(\bigvee \set{\beta \in Q(0) \mid e \cdot
      \beta \leq \omega }) \\
    & = \bigvee \set{Q(\lambda_{0}^{\sharp})(\beta) \mid \beta \in
      Q(0),\, \brack{e,Q(\lambda_{0}^{\sharp})(\beta)}_{I,0} \leq
      \omega
    } \\
    & = \bigvee \set{\gamma \in Q(I \impl 0) \mid
      \brack{e,\gamma}_{I,0} \leq \omega } = \omega_{I}(e)\,.
  \end{align*}
  It follows that
  $\omega = Q((\lambda_{0}^{\sharp})^{-1})(\omega_{I}(e))$.  By
  item~2, $\omega_{I}(e) \in \Qj(I \impl 0)$ and since morpshims of
  the form $Q(f)$ with $f : X \rto Y$ invertible restrict to maps from
  $\Qj(X) \rto \Qj(Y)$, see Proposition~\ref{prop:Qj},
  $\omega = Q((\lambda_{0}^{\sharp})^{-1})(\omega_{I}(e)) \in \Qj(0)$.
  \qedhere
\end{enumerate}
\end{proofEnd}

\medskip

The following theorem is 
a direct consequence of these
observations and of Proposition~\ref{prop:starautFromFixpoints}.
\begin{theoremEnd}[category=doubleNegation]{theorem}
  Let $\C$ be a \saut category and $Q : \C \rto \SLatt$ be a monoidal
  functor. Let $0$ be a dualizing object of $\C$ and pick
  $\omega \in Q(0)$. Then $(0,\omega)$ is a dualizing object of
  $\int \Qj$.
\end{theoremEnd}
\begin{proofEnd}
  As we are dealing with two monoidal functors, $Q$ and $\Qj$, we
  shall distinguish the structures making $\iQ$ and $\int \Qj$ by
  adding $\j$ in superscript. For example, we have the following
  characetrisation if $\brackj{\funNB,\funNB}_{X,Y}$: 
  \begin{align*}
    \brackj{\alpha,\beta}_{X,Y} & = \Qj(\ev_{X,Y})(\,\muj(\alpha,\beta)\,)
    \\
    & =  \j_{Y}(\,Q(\ev_{X,Y})(\j_{X\tensor (X \impl Y)}(\mu_{X,X \impl
      Y}(\alpha,\beta)))\,) \\
    & = \j_{Y}(Q(\ev_{X,Y})(\mu_{X,X \impl Y}(\alpha,\beta)))
        \tag*{by equation~\eqref{eq:quotient}}
    \\
    & = \j_{Y}(\brack{\alpha,\beta}_{X,Y})
    \,.
  \end{align*}
  Using this, we see that the right adjoint to $\brackj{\alpha,\cdot}$
  is $\iota_{X,Y}(\alpha,\cdot)$. Indeed, for $\alpha \in \Qj(X),
  \beta \in \Qj(X \impl Y)$, and $\gamma \in \Qj(Y)$, we have
  $\iota_{X,Y}(\alpha,\gamma) \in \Qj(Y)$ and
  \begin{align*}
    \brackj{\alpha,\beta}_{X,Y} = \j_{Y}(\brack{\alpha,\beta}_{X,Y}) \leq \gamma
    & \Tiff
    \brack{\alpha,\beta}_{X,Y} \leq \gamma  \Tiff
    \beta \leq \iota_{X,Y}(\alpha,\gamma)\,.
  \end{align*}
  Since $\omega \in \Qj(0)$, then
  $\omegaj_{X}(\alpha) = \iota_{X,0}(\alpha,\omega) =
  \omega_{X}(\alpha)$.  We also have $\Lnegj{\beta} =
  \Lneg{\beta}$. Indeed, $\beta \in \Qj(\SX)$, since
  $\Lneg{\beta} = Q(j^{-1}_{X})(\omega_{\SX}(\beta)) \in \Qj(X)$ by
  Lemma~\ref{lemma:pairingTwist}, $\omega_{\SX}(\beta) \in \Qj(\SSX)$
  by item 2 of Lemma~\ref{lemma:ImplLaxClosed}, $Q(j^{-1}_{X})$
  restricts to a map $\Qj(\SSX) \rto \Qj(X)$ by
  Proposition~\ref{prop:Qj}.
  \newline
  Thus
  $\Lnegj{(\omegaj_{X}(\alpha))} = \Lneg{(\omega_{X}(\alpha))}=
  \j_{X}(\alpha) = \alpha$, for every object $X$ of $\C$ and every
  $\alpha \in \Qj(X)$, and therefore $\int \Qj$ is \saut, using
  Proposition~\ref{prop:starautFromFixpoints}.
\end{proofEnd}

\medskip

\newcommand{\Puq}{P \circ UQ} 
\newcommand{\PUQ}{(P \circ UQ)} 
\newcommand{\jindep}{\jmath^{\independent}} 
\newcommand{\PUQj}{\PUQ^{\jindep}}

We end this section by giving a representation theorem analogous to
the well-known representation theorem for Girard quantales that yields
the completeness of phase semantics of linear logic. The theorem is
also meant to illustrate how our construction relate to the focused
orthogonality categories introduced in \cite{HS2003} and further
studied in \cite{Fiore2023}, see also Example~\ref{example:focs}.
\begin{theoremEnd}[category=doubleNegation]{theorem}
  \label{thm:representation} 
  Let $Q : \C \rto \SLatt$ be a monoidal functor and suppose that
  $(0,\omega)$ is a dualizing element of $\iQ$. Consider the functor
  $P \circ UQ$, where $UQ$ is the composition
  $\C \rto[Q] \SLatt \rto[U] \Set$ of $Q$ and the forgetful functor to
  $\Set$, and where $P : \Set \rto \SLatt$ is the covariant powerset
  functor (free complete sup-lattice functor). Define
  \begin{align*}
    \independent & \eqdef \set{ x \in Q(0) \mid x \leq \omega}\,.
  \end{align*}
  Then, the maps sending $x \in Q(X)$ to
  $\Downset{x} \eqdef \set{x' \in{ Q(X) \mid x' \leq x}} \subseteq
  Q(X)$ form a natural isomorphism between the functors
  $Q : \C \rto \SLatt$ and
  $(P \circ UQ)^{\jmath^{\independent}} : \C \rto \SLatt$.
\end{theoremEnd}
\begin{proofEnd}
  Let us argue that a set $A \subseteq Q(X)$ belongs to $\PUQj(X)$ if
  and only if it is of the form $\Downset{\alpha}$ for some
  $\alpha \in Q(X)$. To this goal, observe that, for $B \in \PUQ(\SX)$,
  \begin{align*}
    \Lneg{B}  & =  \set{ \alpha \in Q(X) \mid \brack{\alpha,\beta}
      \leq \omega,\, \text{for all $\beta \in B$}}\,.
  \end{align*}
  Moreover, by general theory of Galois connections, all the elements
  of $\PUQj(X)$ are of this form.  Now, it is immediate that if
  $\alpha \in \Lneg{B}$ and $\alpha' \leq \alpha$, then
  $\alpha' \in \Lneg{B}$. Moreover, since $\brack{\funNB, \beta}_{X}$
  is \sp, $\bigvee (\Lneg{B}) \in \Lneg{B}$. Thus, if
  $A \in \PUQj(X)$, then $A = \Downset{\bigvee A}$.
  \newline
  Next, for $\alpha \in Q(X)$, $\Downset{\alpha} \in \PUQj(X)$, since
  \begin{align*}
    \Lneg{\set{\omega_{X}(\alpha)}} & = \set{\alpha' \in Q(X) \mid
      \brack{\alpha',\omega_{X}(\alpha)}_{X} \leq \omega} 
     = \Downset{\j_{X}(\alpha)} = \Downset{\alpha}\,,
  \end{align*}
  since we are assume that $Q(X) = \Qj(X)$.  Thus, $\PUQj(X)$ is the
  set of all the principal downsets of $Q(X)$ and
  $\Downset{\funNB} : Q(X) \rto \PUQj(X)$ is an order isomorphism, thus
  an isomorphism in $\SLatt$.
  \newline
  For any closure operator $j$
  on a complete lattice $L$, $j(x)$ is the least element of
  $\set{ y \in L \mid j(y) = y}$. It follows then that
  \begin{align*}
    \jindep_{X}(A) & = \Downset{\bigvee A}\,, \quad \text{for each
      $A \subseteq Q(X)$}.
  \end{align*}
  For $f : X \rto Y$ in $\C$, we have
  \begin{align*}
    \PUQj(f)(\Downset{\alpha} ) & =  \Downset[Y]{\bigvee \set{Q(f)(\alpha')
      \mid \alpha' \leq \alpha}} \\
    & =  \Downset[Y] {Q(f)(\bigvee \set{\alpha'
      \mid \alpha' \leq \alpha}} = \Downset[Y]{Q(f)(\alpha)}\,,
  \end{align*}
  showing that the maps sending $\alpha \in Q(X)$ to
  $\Downset{\alpha} \in \PUQj(X)$ are natural in $X$.
\end{proofEnd}

\section{Lifting fixed points of functors}
\label{sec:liftingfixedpoints}

This research was also motivated by ongoing research on the algebraic
and categorical semantics of linear logic with fixed points
\cite{EJ2021,AdJS2022,Farzad2023}. In these works, several questions
about initial and \final (co)algebras of endofunctors of the category
$\Nuts$ were raised.  We shall argue in Example~\ref{exp:nuts} that
this category also arises as $\int Q$ for some functor $Q$ into
$\SLatt$.  This was realised in \cite{Fiore2023} where these questions
were also settled. Nonetheless, we believe pertinent to pinpoint here
how this kind of results relate to the general theory developed in
Section~\ref{sec:liftingFunctors}.

Recall that, given a category $\C$ and an endofunctor $F:\C \rto \C$,
an $F$-\emph{algebra} (resp. $F$-\emph{coalgebra}) is an object $X$
together with a morphism (called structure) $\gamma : F(X) \rto X$
(resp. $\gamma : X \rto F(X)$). A morphism of algebras
(resp. coalgebras) $(X,\gamma) \rto (Y,\delta)$ is simply a morphism
$f: X \rto Y$ such that the following diagram commutes (respectively):
\begin{center}\begin{tikzcd}
F(X) \arrow[r,"F(f)"] \arrow[d,"\gamma",swap] & F(Y) \arrow[d,"\delta"] & F(X) \arrow[r,"F(f)"]  & F(Y) \\
X \arrow[r,"f",swap] & Y & X \arrow[u,"\gamma"] \arrow[r,"f",swap] & Y \arrow[u,"\delta",swap]
\end{tikzcd}\end{center}
Algebras (resp. coalgebras) and their morphisms form a category
$\alg_C(F)$ (resp. $\coalg_C(F)$). Notice that we have
$\coalg_C(F)={\alg_{C^{\op}}(F^{\op})}^{\op}$, where
$F^{\op} : \C^{\op} \rto \C^{\op}$ is the same functor as $F$, just
considered as an endofunctor of $\C^{\op}$.

An \emph{initial algebra} (resp. \emph{\final coalgebra}) is an
initial (resp. \final) object of $\alg_C(F)$ ($\coalg_C(F)$). It is
well-known that the structure of an initial algebra (terminal
coalgebra) is invertible \cite{Lambek68}. For this reason, initial and
terminal (co)algebras of functors have been used to give semantics (of
proofs) to logics with least and greatest fixed point-operators.

\medskip

Let $Q:\C \rto \Pos$ and $F:\C\rto \C$ be functors and suppose that
there is a lifting $\lF : \int Q \rto \int Q$. Denote by
$\psi : Q \rto QF$ the lax natural transformation corresponding to
$\lF$, as given in Theorem~\ref{thm:lifting}.

\begin{theoremEnd}[category=liftingfixedpoints]{proposition}
  \label{coalgSQ'}
  For an F-coalgebra $(X,\gamma)$, define
  \begin{align*}
    \Qnu(X,\gamma) & \eqdef \set{ \alpha \in Q(X) \mid Q(\gamma)(\alpha) \leq
      \psi_X(\alpha) }\,.
  \end{align*}
  Then, for $f:(X,\gamma) \rto (Y,\delta)$ a coalgebra morphism,
  $Q(f)$ restricts from $\Qnu(X,\gamma)$ to $\Qnu(Y,\delta)$. Thus $\Qnu$ is a
  functor $\coalg_{\C}(F) \rto \Pos$ and   we have an isomorphism
  \begin{align}
    \label{eq:isocoagint}
    \coalg_{\iQ}(\lF) &\simeq \int \Qnu\,.
  \end{align}
\end{theoremEnd}
\begin{proofEnd}
  Let $f:(X,\gamma) \rto (Y,\delta)$ be a coalgebra morphism and
  assume $Q(\gamma)(\alpha) \leq \psi_{X}(\alpha)$.  Verification that
  $Q(\delta)(Q(f)(\alpha)) \leq \psi_{Y}(Q(f)(\alpha)$ is achieved as
  follows:
  \begin{align*}
    Q(\delta)(Q(f)(\alpha)) & = Q(Ff)(Q(\gamma)(\alpha))\,, \tag*{since
      $f$ is a coalgebra morphism,} \\
    & \leq Q(Ff)(\psi_{X}(\alpha))\,, \tag*{by assumption on $\alpha$,} \\
    & \leq (\psi_{Y}(Q(Ff)(\alpha))\,, \tag*{by lax-naturality of  $\psi$.} \\
  \end{align*}
  For $(X,\alpha) \in \int Q$, recall that
  $\lF(X,\alpha)=(F(X),\psi_X(\alpha))$. Therefore, an $\lF$-coalgebra
  $((X,\alpha),\gamma)$ is given by an $F$-coalgebra $(X,\gamma)$ such
  that
  \[
    Q(\gamma)(\alpha) \leq \psi_X(\alpha)\,.
  \]
  A morphism of $\lF$-coalgebras
  $f:((X,\alpha),\gamma) \rto ((Y,\beta),\delta)$ is just a morphism
  of $F$-coalgebra $f:X \rto Y$ in $\C$ such that
  $Q(f)(\alpha) \leq \beta$.
  \newline
  Then, the mapping sending an object $((X,\alpha),\gamma)$ of
  $\coalg_{\iQ}(\lF)$ to the object $((X,\gamma),\alpha)$ of $\int \Qnu$
  (and a map $f$ to itself) is the desired isomorphism
  $\coalg_{\iQ}(\lF) \simeq \int \Qnu$.
\end{proofEnd}

We can give a similar result for algebras if we assume that
$\psi : Q \rto Q\circ F$ is natural
or that $Q$ factors through $\SLatt$. 
For the latter case, we need to develop some tools, illustrating the
intrinsic dualities of the category $\SLatt$.  Suppose therefore that
$Q$ factors through $\SLatt$, so $Q(X)$ is a complete lattice and
$Q(f)$ has a \ra $Q(f)^{\ast}$. Denote by $\QS:C^{\op} \rto \Pos$ the
functor such that $\QS(X)\eqdef Q(X)^{\op}$ and
$\QS(f)\eqdef\Star{Q(f)}$. Obviously, also $\QS$ factors through $\SLatt$.
\begin{theoremEnd}[category=liftingfixedpoints]{lemma}
  \label{SQopSQ*}
  If $Q$ factors through $\SLatt$, then the following statements hold:
\begin{enumerate}
\item The functor $\Qnu$ factors through $\SLatt$. 
\item We have $(\int Q)^{\op} = \int \QS$ and the functor $\lF^{\op}$
  is a lifting of $F^{\op}$ to $\int \QS$. 
\end{enumerate}
\end{theoremEnd}
\begin{proofEnd}
  \begin{enumerate}
  \item First, observe that for $(b_i)_i \subset \Qnu(B,\beta)$---that
    is $b_i \in Q(B)$ is such that $Q(\beta)(b_i) \leq \psi_B(b_i)$,
    for each $i$---we have
    \[
      Q(\beta)(\bigvee_{i} b_i) \leq \bigvee_{i} Q(\beta)(b_i) \leq \bigvee_{i} \psi_B(b_i) \leq \psi_B(\bigvee b_i)
    \]
    because $Q(\beta)$ is \sp and $\psi_B$ is \opr. This shows that
    $\Qnu(B,\beta)$ is a subset of $Q(B)$ closed under
    suprema. Necessarily, for a coalgebra morphism
    $f : (B,\delta) \rto (A,\gamma)$, the restriction $Q(f)$ to
    $Q(B,\delta)$ preserves these suprema.  So $\Qnu(B,\beta)$ is a
    complete lattice and $\Qnu$ is functor because it is the
    restriction of $Q$ for morphism.
    \\
  \item Objects of $\int \QS$ are of the form $(X,\alpha)$ where
    $X \in \C$ and $\alpha \in Q(X)^{\op}$, where we can simply write
    $\alpha \in Q(X)$.  Morphisms $(X,\alpha) \rto (Y,\beta)$ are maps
    $f:Y \rto X$ in $\C$ such that $\Star{Q(f)}(\alpha) \leq \beta$ in
    $Q(Y)^{\op}$, that is, $\beta \leq \Star{Q(f)}(\alpha)$ in
    $Q(Y)$. Under the adjunction, this is equivalent to
    $Q(f)(\beta) \leq \alpha$ in $Q(X)$.  Let us spell out what is
    $(\iQ)^{\op}$. Objects of $(\iQ)^{\op}$ are pairs $(X,\alpha)$
    where $X \in C$ and $\alpha \in Q(X)$. A morphism
    $(X,\alpha) \rto (Y,\beta)$ in $(\iQ)^{\op}$ is a morphism
    $f:Y \rto X$ in $C$ such that $Q(f)(\beta) \leq \alpha$.  Moreover
    the following diagram commutes, thus showing the very last statement:
    \begin{equation*}
      \begin{tikzcd}[ampersand replacement=\&]
        \iQ^* \arrow[r,"\lF^{\op}"] \arrow[d,"\pi",swap] \& \iQ^*  \arrow[d,"\pi"]  \\
        C^{\op} \arrow[r,"F^{\op}",swap] \& C^{\op} \\
      \end{tikzcd}
      \tag*{\qedhere}
    \end{equation*}
  \end{enumerate}
\end{proofEnd}

\begin{theoremEnd}[category=liftingfixedpoints]{proposition}
  \label{algSQ'}
  Assume that $\psi : Q \rto Q \circ F$ is
  natural or that $Q$ factors through $\SLatt$.
  For an $F$-algebra $(X,\gamma)$, define
  \begin{align*}
    \Qmu(X,\gamma) & \eqdef \set{ \alpha \in Q(X) \mid Q(\gamma)(\psi_{X}(\alpha)) \leq
      \alpha }\,.
  \end{align*}
  If $\psi$ is natural, then, for $f:(X,\gamma) \rto (Y,\delta)$ an
  algebra morphism, $Q(f)$ restricts to a map from $\Qmu(X,\gamma)$ to
  $\Qmu(Y,\delta)$.
  If $Q$ factors through $\SLatt$, then $\Qmu$ is extended to a
  functor so to have $\Qmu = \QSnuS$.  In both cases, $\Qmu$ is made
  into a functor $\alg_{\C}(F) \rto \Pos$ and we have an isomorphism
  \begin{align}
    \label{eq:isoagint}
    \alg_{\iQ}(\lF) & \simeq \int \Qmu\,.
  \end{align}
\end{theoremEnd}
\begin{proofEnd}
  Suppose that $\psi$ is natural, let $f:(X,\gamma) \rto (Y,\delta)$
  be an algebra morphism and assume
  $Q(\gamma)(\psi_{X}(\alpha)) \leq \alpha$.  Verification that
  $Q(f)(\alpha) \in \Qmu(Y,\delta)$, that is,
  $Q(\delta)(\psi_{Y}(Q(f)(\alpha))) \leq Q(f)(\alpha)$, is achieved as
  follows:
  \begin{align*}
    Q(\delta)(\psi_{Y}(Q(f)(\alpha))) & =
    Q(\delta)(Q(Ff)(\psi_{X}(\alpha)))\,, \tag*{by naturality of
      $\psi$,} \\
    & = Q(f)(Q(\gamma)(\psi_{X}(\alpha)))\,, \tag*{since
      $f$ is an algebra morphism,} \\
    & \leq Q(f)(\alpha)\,, \tag*{by assumption on $\alpha$.}
  \end{align*}
  An $\lF$-algebra
  $((X,\alpha),\gamma)$ is given by an $F$-algebra $(X,\gamma)$ such
  that
  \[
    Q(\gamma)(\psi_{X}(\alpha)) \leq \alpha\,.
  \]
  A morphism of $\lF$-algebras
  $f:((X,\alpha),\gamma) \rto ((Y,\beta),\delta)$ is just a morphism
  of $F$-algebras $f:(X,\gamma) \rto (Y,\delta)$ in $\C$ such that
  $Q(f)(\alpha) \leq \beta$.
  Then, the mapping sending an object $((X,\alpha),\gamma)$ of
  $\alg_{\iQ}(\lF)$ to the object $((X,\gamma),\alpha)$ of $\int \Qmu$
  (and a map $f$ to itself) is the desired isomorphism
  $\alg_{\iQ}(\lF) \simeq \int \Qmu$.
  \medskip\newline
  Next, suppose that $Q$ factors through $\SLatt$. As we no more
  assume that $\psi$ is natural, for an algebra morphism
  $f : (X,\gamma) \rto (Y,\delta)$ and $\alpha \in \Qmu(X,\gamma)$,
  $Q(f)(\alpha)$ might not belong to $\Qmu(Y,\delta)$. Yet, since
  $\Qmu(Y,\delta)$ is a subset of $Q(Y)$ closed under infima, we can define
  \begin{align}
    \label{eq:defQmuf}
    \Qmu(f)(\alpha) & \eqdef \bigwedge \set{\beta \in \Qmu(Y,\delta)
      \mid Q(f)(\alpha) \leq  \beta}\,.
  \end{align}
  The definition in \eqref{eq:defQmuf} doesn't make explicit that
  $\Qmu$ is a functor.  We claim that this is the case, since we
  actually have the equality
  \begin{align}
    \label{eq:Qmu}
    \Qmu & = \QSnuS\,.
  \end{align}
  Assuming this equality, and recalling we can describe algebras in
  term of coalgebras, we
  obtain
  \begin{align*}
    \alg_{\int Q}(\lF) & = \left [\coalg_{(\int
        Q)^{\op}}(\lF^{\op}) \right ]^{\op} =
    \left [\coalg_{\int Q^*}(\overline{F^{\op}}) \right ]^{\op}\\
    & \simeq \left [ \int \QSnu \right ]^{\op} = \int \QSnuS = \int \Qmu\,,
  \end{align*}
  where we have used, in the order,  Lemma~\ref{SQopSQ*}, 
  Proposition~\ref{coalgSQ'},  again Lemma \ref{SQopSQ*}, and then \eqref{eq:Qmu}.
  \medskip\newline
  Next, let us show that $\Qmu=\QSnuS$. We have
  $\QSnu: \coalg_{\C^{\op}}(F^{\op}) \rto \Pos$, so
  $\QSnuS : \left [\coalg_{\C^{\op}}(F^{\op})\right ]^{\op} = \alg_\C(F)
  \rto \Pos$.
  Let $(X,\gamma) \in \alg_\C(F)$, we have
  $\QSnuS(X,\gamma)= \QSnu(X,\gamma)^{\op}$ and
  \begin{align*}
    \QSnu(X,\gamma) &= \set{\alpha \in Q(X)^{\op} \mid
      \QS(\gamma)(\alpha) \leq \psi_X(\alpha) \text{ in }QF(X)^{\op}
    }\,.
  \end{align*}
  Because the order of this poset is the one from $Q(X)^{\op}$,
  the order of the opposite poset is the one from
  $Q(X)$.  
  Also the inequality condition above is equivalent, by transposition,
  to $Q(\gamma)(\psi_X(\alpha)) \leq \alpha$. Then
  \begin{align*}
    \QSnuS(X,\gamma) & = \QSnu(X,\gamma)^{\op} = \set{\alpha \in Q(X)
      \mid Q(\gamma)(\psi_X(\alpha)) \leq \alpha \text{ in }Q(X) } =
    \Qmu(X,\gamma)\,.
  \end{align*}
  That is, the equality \eqref{eq:Qmu} holds on objects.
  \medskip\newline
  For morphisms, consider that $f:(X,\gamma) \rto (Y,\delta)$ in
  $\alg_\C(F)$.  Recall that, as a morphism of
  $\coalg_{\C^{\op}}(F^{\op})$, $f$ is typed the opposite way,
  $f:(Y,\delta) \rto (X,\gamma)$. 
  Let
  $\alpha \in \Qmu(X,\gamma)$, $\beta \in \Qmu(Y,\delta)$, and observe
  the following chain of equivalences:
  \begin{align*}
    \Qmu(f)(\alpha) \leq \beta \within{\Qmu(Y,\delta)} & \Tiff
    Q(f)(\alpha) \leq \beta
    \within[\,,]{Q(Y)}
    \tag*{by definition of $\Qmu(f)(\alpha)$,}
    \\
    & \Tiff   \alpha \leq Q(f)^{\ast}(\beta)
    \within{Q(X)} \\
    & \Tiff   \QS(f)(\beta) = Q(f)^{\ast}(\beta) \leq \alpha 
    \within{\QS(X) = Q(X)^{\op}} \\
    & \Tiff   \QSnu(f)(\beta) \leq \alpha 
    \within{\QSnu(X,\gamma)}
    \tag*{since $\alpha,\beta \in \QSnu(X) = \Qmu(X,\gamma)^{\op}$ and
      $\QS(f)$ restricts from $\QS$ to $\QSnu$} \\
    & \Tiff \beta \leq \QSnu(f)^{\ast}(\beta)
    \within{\QSnu(Y,\delta) } \\
    & \Tiff \QSnuS(f)(\alpha) = \QSnu(f)^{\ast}(\alpha) \leq \beta
  \end{align*}
  where the last inequality is taken in
  $\QSnu(Y,\delta)^{\op}= \QSnuS(Y,\delta) = \Qmu(Y,\delta)$.  From
  these equivalences we obtain the equality
  $\Qmu(f)(\alpha) = \QSnuS(f)(\alpha)$, for each
  $\alpha \in \Qmu(X,\gamma)$, as aimed.
\end{proofEnd}

The isomorphisms in~\eqref{eq:isocoagint} and~\eqref{eq:isoagint}
allow to reduce the existence (or lifting) of a \final coalgebra
(initial algebra) to the existence of a \final (initial) object in a
generic op-fibration $\int Q \rto \C$.  It is straightforward that if
$1$ a \final object of $\C$ and $\top_{1}$ is the supremum of $Q(1)$,
then $(1,\top_{1})$ is a \final object of $\iQ$.  Similarly, if $0$ is
an initial object of $\C$, $\bot_{0}$ is the least element of $Q(0)$,
and for each object $X$, the map $Q(!_{X}) : Q(0) \rto Q(X)$ sends
$\bot_{0}$ to a least element of $Q(X)$, then $(0,\bot_{0})$ is an
initial object of $\int Q$. Thus we have the following proposition, that
for convenience we state directly for functors $Q$ factoring through
$\SLatt$, leaving the possible generalizations to the reader.

\begin{theoremEnd}[category=liftingfixedpoints]{proposition}
  Let $Q : \C \rto \SLatt$ and suppose that an endofunctor $F$ of $\C$
  has a lifting $\lF$ to $\int Q$. 
  \begin{enumerate}
  \item If $F$ has a \final $F$-coalgebra $(\nu.F,\chi)$, 
    then the object $(\nu.F,\chi,\nu.\phi)$ of
    $\int \Qnu \simeq \coalg_{\iQ}(\lF)$ is \final, where $\nu.\phi$
    is the greatest fixed point of the map $\phi$ defined as the
    composition
    $Q(\nu.F) \rto[\psi_{\nu.F}] Q(F(\nu.F)) \rto[Q(\chi^{-1})]
    Q(\nu.F)$.
  \item Suppose that $\psi: Q \rto Q \circ F$ is natural and that $F$ has an
    initial algebra $(\mu.F,\chi)$. 
    For each $F$-algebra $(X,\gamma)$,
    let $\mu.\phi_{X}$ denote the least fixed point of the map
    $\phi_{X}$ defined as the composition
    $Q(X) \rto[\psi_{X}] Q(F(X)) \rto[Q(\gamma)] Q(X)$.  Then the
    object $(\mu.F,\chi,\mu.\phi_{\mu.F})$ of
    $\int \Qmu \simeq \alg_{\iQ}(\lF)$ is initial. 

  \end{enumerate}
\end{theoremEnd}
Notice that, in the second item of the above statement, the maps
$\phi_{X}$ need not be \sp, since we do not require the maps
$\psi_{X}$ to be \sp. Therefore, the least fixed point of
$\phi_{\mu.F}$ might not coincide with the least element of
$Q(\mu.F)$.

\begin{proofEnd} For the first statement,
  consider that the structure $\chi : \nu.F \rto
  F(\nu.F)$ is invertible, so elements in
  $\Qnu(\nu.F,\chi)$ are exaclty post-fixed points of the map $\phi$.
  The statement is then an obvious application of the sufficient
  condition given for the extisence of a \final object in $\iQ$
  \newline
  The second statement is also an application of the sufficient
  condition given for the existence of an initial object in
  $\iQ$.  For this, we need to argue that least fixed
  points of the maps $\phi_{X} : Q(X) \rto
  Q(X)$ are preserved by maps of the form $Q(f)$ with
  $f$ and algebra morphism.  This is a standard consequence in fixed
  poiont theory of commutativity of the following diagram:
  \begin{center}
    \begin{tikzcd}[ampersand replacement=\&]
      \ar[bend left=20]{rr}{\phi_{X}}
      Q(X) \ar{d}{Q(f)} \ar{r}{\psi_{X}} \& Q(F(X)) \ar{d}{Q(F(f))}
      \ar{r}{Q(\gamma)} \& Q(X) \ar{d}{Q(f)} \\
      \ar[bend right=20,swap]{rr}{\phi_{Y}}
      Q(Y) \ar{r}{\psi_{Y}} \& Q(F(Y)) \ar{r}{Q(\delta)} \& Q(Y)
    \end{tikzcd}
  \end{center}
  and of the fact that $Q(f)$ has a \ra.
\end{proofEnd}

\section{Examples}
\label{sec:examples}

\begin{Example}[Categories of poset-valued sets]
  \label{ex:DePaivaSchalk}
  A main reason for developing this research was to understand the
  structure of categories of the form \QSet introduced in
  \cite{SchalkDePaiva2004} and, possibly, to devise generalization of
  this kind of constructions.

  For $Q = (Q,e,\qmult)$ a unital commutative quantale, the category
  $\QSet$ is defined as follows:
  \begin{itemize}
  \item An object is a pair $(X,\alpha)$ with $X$ a set and
    $\alpha :X \rto Q$ a function.
  \item An arrow $R : (X,\alpha) \rto (Y,\beta)$ is a relation
    $R : X \rto Y$ such that, for all $x \in X$ and $y\in Y$, $xRy$
    implies $\alpha(x) \leq \beta(y)$.
  \end{itemize}
  Let $Q(X) \eqdef Q^{X}$ be the set of functions from $X$ to $Q$,
  pointwise ordered and, 
  for a relation $R : X \rto Y$, let $Q(R) : Q^{X} \rto Q^{Y}$ be
  defined by
  \begin{align*}
    Q(R)(\alpha)(y) & \eqdef \bigvee_{x R y} \alpha(x)\,,\quad
    \text{for each $\alpha \in Q^{X}$ and $y \in Y$.} 
  \end{align*}
  It is easily verified that $Q(X)$ is a complete lattice, that $Q(R)$
  is \sp, and that $Q$ is a functor $\Rel \rto \Pos$ factoring through
  $\SLatt$. Clearly, $\QSet$ arises as the total category $\iQ$.
\begin{commentout}
    For example:
    \begin{align*}
      Q(S)(Q(R)(\alpha)(y)) & = Q(S)(\bigvee_{x R y} \alpha(x)) =
      \bigvee_{ySz} \bigvee_{x R y} \alpha(x) = \bigvee_{xR;Sz}
      \alpha(x) = Q(R;S)\,.
    \end{align*}
    It also seen that $Q$ is actuallty a 2-functor. (So, for example,
    $Q(\pi_{X}^{\ast})$ is a right adjoint in $\SLatt$, which means
    that it also preserves infima).
\end{commentout}
While, for each set $X$, $Q(X)$ is a quantale (with the product
structure), for a relation $R : X \rto Y$, only the inclusions
$Q(R)(1) \leq 1$ and
$Q(R)(\alpha \qmult \beta) \leq Q(R)(\alpha) \qmult Q(R)(\beta)$
hold---that is, $Q(R)$ is comonoidal.  Nonetheless, and recalling that
$Q(\singleton)$ can be identified with $Q$, if we define
\begin{align*}
  u & \eqdef e \in Q(\singleton)\,, & \mu_{X,Y}(\alpha,\beta)(x,y)
  \eqdef \alpha(x) \qmult \beta(y)\,,
\end{align*}
then $\mu_{X,Y}$ is natural (and not merely lax-natural), $u$ and
$\mu$ make $Q$ into a monoidal functor, and $\mu_{X,Y}$ is
bilinear. Thus $Q$ \emph{monoidally} factors though $\SLatt$.  The
monoidal structure of $\QSet$ exhibited in \cite{SchalkDePaiva2004} is
the one induced by $\mu_{X,Y}$. Recall that $\Rel$ is, w.r.t. the
monoidal structure given by $\times$ and $\singleton$, compact closed
category, that is, we have $X \impl Y = X \times Y$. It is verified
that
\begin{align}
  \iota_{X,Y}(\alpha,\beta)(x,y) & = \alpha(x) \impl \beta(y)\,, \quad
  \text{thus, for $\omega \in Q$}\quad
  \omega_{X}(\alpha)(x) = \alpha(x) \impl \omega\,.
  \label{eq:structureRel}
\end{align}

\begin{theoremEnd}[category=examples, all end]{lemma}
  \label{lemma:comonoidal}
  For each $R : X \rto Y$, $Q(R)$ is comonoidal. That is,
  $Q(R)(1) \leq 1$ and
  $Q(R)(\alpha \cdot \beta) \leq Q(R)(\alpha) \cdot Q(R)(\beta)$, for
  each $\alpha,\beta \in Q(X)$.  If, for each $x,x',w$ such that $xRw$
  and $x'Rw$, there exists $z$ such that $z R w$,
  $\alpha(x) \leq \alpha(z)$, and $\beta(x') \leq \beta(z)$, then
  $Q(R)(\alpha \cdot \beta) = Q(R)(\alpha \cdot \beta)$.
\end{theoremEnd}
\begin{proofEnd}
  We have
  \begin{align*}
    Q(R)(\alpha\cdot \beta)(w) & = \bigvee_{x R w} \alpha(x)\cdot
    \beta(x) 
    \leq^{\ast}  \bigvee_{x R w,x'Rw} \alpha(x)\cdot \beta(x') \\
    &
    = (Q(R)(\alpha)(w)) \cdot (Q(R)(\beta)(w)) = Q(R)(\alpha \cdot \beta)(w)\,.
  \end{align*}
  Next, if for each $x,x',w$, $xRw$, $x'Rw$, implies
  $\alpha(x) \leq \alpha(z)$ and $\beta(x') \leq \beta(z)$, for some
  $z$ such that $z R w$, then the set
  $\set{\alpha(x)\cdot \beta(x)\mid xRw}$ is cofinal in
  $\set{\alpha(x)\cdot \beta(x')\mid xRw, x'Rw}$ and the inequality
  marked above is an equality.
\end{proofEnd}

\begin{theoremEnd}[category=examples, all end]{proposition}
  $\mu_{X,Y}$ is a natural transformation and, together with $u$,
  makes $Q$ into a monoidal functor.
\end{theoremEnd}
\begin{proofEnd}
  While a direct verification is immediate, we use
  Lemma~\ref{lemma:comonoidal}.
  \newline
  Notice
  that $\mu_{X,Y}(\alpha,\beta) = \talpha \cdot \tbeta$, where for
  each $(x,y) \in X \times Y$, $\talpha(x,y) = \alpha(x)$ and where
  $\tbeta$ is defined similarly.
  \newline
  For $R : X \rto X'$ and
  $S : Y \rto Y'$, the first part of the Lemma yields the
  following inclusion:
  \begin{align*}
    Q(R \times S)(\mu(\alpha,\beta)) & = Q(R \times Y)(\talpha \cdot
    \tbeta)
    \leq Q(R \times S)(\talpha) \cdot Q(R \times S)(\tbeta)
    \\
    & =
    \wt{Q(R)(\alpha)} \cdot \wt{Q(S)(\beta)} = \mu_{X',Y'}(Q(R)(\alpha), Q(S)(\beta))\,.
  \end{align*}
  For the reverse inclusion we use the second part of the Lemma.
  Let therefore $x,x',y,y',w,u$ be such that
  $(x,y) R \times S (w,u)$ and $(x',y') R'\times S' (w,u)$. Then
  $(x,y') R \times S (w,u)$ and
  $\talpha(x,y') = \alpha(x) = \talpha(x,y)$ and, similarly,
  $\tbeta(x,y') = \beta(y') = \tbeta'(x',y')$.  Thus, the inequality
  in the display above is an equality.
\end{proofEnd}

\begin{commentout}
  As an exercise, we compute $\iota_{X,Y}$ using the formula given
  in Theorem~\ref{thm:iotaSLatt}.  Observe that
  \begin{align*}
    Q(\ev_{X,Z})(\mu_{X,X \times Z}(\alpha,\beta))(z) & = \bigvee_{x
      \in X} \alpha(x) \cdot \beta(x,z)\,,
  \end{align*}
  and therefore
  \begin{align*}
    (\iota_{X,Z}(\alpha,\gamma))(x,z) & = \bigvee \set{\beta \in
      Q(X\times Z)\mid Q(\ev_{X,Z})(\mu_{X,X \times
        Z}(\alpha,\beta))
      \leq \gamma}(x,z) \\
    & = \bigvee \set{ \beta(x,z) \in Q \mid Q(\ev_{X,Z})(\mu_{X,X
        \times
        Z}(\alpha,\beta)) \leq \gamma}\\
    & = \bigvee \set{\beta(x,z) \mid \forall x',z'
      \;\;\alpha(x')\beta(x',z') \leq \gamma(z')} = \alpha(x) \impl
    \gamma(z)
  \end{align*}
  as expected from \cite{SchalkDePaiva2004}.
  \begin{proposition}
    A dualizing object of $\QSet$ is exaclty of the form
    $(1,\omega)$, where $1 = \singleton$ is a singleton and $\omega$
    a function tha picks a dualizing element of $Q$.
  \end{proposition}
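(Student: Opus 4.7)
The plan is to apply Theorem~\ref{thm:omegaInv} to the setting of $\QSet = \iQ$ for $Q : \Rel \rto \SLatt$ as described above. The paper has already shown that this $Q$ monoidally factors through $\SLatt$ (so $\mu$ is natural), so Theorem~\ref{thm:omegaInv} says that an object $(0,\omega)$ of $\QSet$ is dualizing if and only if $0$ is dualizing in $\Rel$ and $\omega_X$ is invertible for every $X$. The proof then splits into two independent steps: characterising which objects $0$ are dualizing in $\Rel$, and, given such a $0$, which $\omega \in Q(0)$ make all the $\omega_X$ invertible.

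First, I would show that the dualizing objects of $\Rel$ are exactly the singletons. Since $\Rel$ is compact closed with $X \impl Y = X \times Y$, we have $\Star{\Star{X}} = X \times 0 \times 0$. A direct computation through the compact closed adjunction shows that $j_X : X \rto X \times 0 \times 0$ is the relation $\set{(x,((x,d),d)) \mid x \in X,\, d \in 0}$. Isomorphisms in $\Rel$ are precisely graphs of bijections; so taking a nonempty $X$ (for instance $X = \singleton$) forces $|0| = 1$, whereas for $|0| = 1$ the relation $j_X$ is obviously the graph of a bijection $X \cong X \times \singleton \times \singleton$. Hence $0$ is dualizing in $\Rel$ iff $0$ is (isomorphic to) a singleton.

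Second, assuming $0 \cong \singleton$, we may identify $Q(0) = Q^{\singleton}$ with the underlying quantale $Q$ and view $\omega$ as an element of $Q$. Then $\Star{X} = X \times \singleton \cong X$, and equation~\eqref{eq:structureRel} gives $\omega_X(\alpha)(x) = \alpha(x) \impl \omega$; that is, $\omega_X$ is the pointwise application on $Q^X$ of the single map $\omega_{\,\cdot\,} : q \mapsto q \impl \omega$ on $Q$. Such a pointwise map is an order-isomorphism $(Q^X)^{op} \rto Q^X$ iff $q \mapsto q \impl \omega$ is itself an order-isomorphism $Q^{op} \rto Q$, and (taking $X = \singleton$) asking this for all $X$ is equivalent to asking it for $X = \singleton$ alone. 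This is precisely the condition that $\omega$ be a dualizing element of the quantale $Q$.

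Combining the two steps yields exactly the claim of the proposition. The main obstacle is the first step, where one has to pin down $j_X$ explicitly in $\Rel$; once that relation is written down, the cardinality argument is immediate, and the rest of the proof is a routine application of the machinery of Section~\ref{sec:staraut} together with the explicit formulas of Example~\ref{ex:DePaivaSchalk}.
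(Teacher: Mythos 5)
Your proof is correct, and it shares the overall skeleton of the paper's argument---compute $j_X$ explicitly in $\Rel$ and reduce the question to a condition on $\omega$ inside the quantale $Q$ via the machinery of Section~\ref{sec:staraut} and the formulas~\eqref{eq:structureRel}---but it routes through a different key result. The paper invokes Proposition~\ref{prop:dualizing}: commutativity of diagram~\eqref{diag:staraut}, evaluated pointwise using $Q(J_X)(\alpha)((x,\singlel),\singlel)=\alpha(x)$, immediately yields the necessary and sufficient condition $\alpha(x) = (\alpha(x)\impl\omega(\singlel))\impl\omega(\singlel)$, which for $X=\singleton$ is literally the defining equation of a dualizing element of $Q$. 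You instead invoke Theorem~\ref{thm:omegaInv} (legitimate here, since $\mu$ is natural for this $Q$), splitting the problem into ``$0$ dualizing in $\Rel$'' plus ``$\omega_X$ invertible for all $X$''; this costs you one extra (standard) step that you state but do not argue: that $q\mapsto q\impl\omega$ being an order-isomorphism $Q^{\op}\rto Q$ is \emph{equivalent} to $(q\impl\omega)\impl\omega=q$ for all $q$. The equivalence holds because this map is self-adjoint in the Galois-connection sense, so if it is invertible its inverse must coincide with its adjoint, i.e.\ with itself (this is exactly the pattern of Lemma~\ref{lemma:omegainv} and Theorem~\ref{thm:omegaInv} versus Proposition~\ref{prop:starautFromFixpoints} in the paper); it would be worth making explicit, since ``dualizing element'' is normally defined by the double-negation equation rather than by invertibility. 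On the other hand, your proof is more complete than the paper's on one point: the paper merely asserts that a dualizing object of $\Rel$ must be a singleton, whereas your cardinality argument on the relation $\set{(x,((x,d),d)) \mid x \in X,\, d \in 0}$ actually proves it.
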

  \begin{proof}
    A dualizing element in $\Rel$ is necessarily a singleton, say
    $\singleton$. The canonical isomorphism $j_{X} : X \rto \SSX$ is
    the relation
    $J_{X} = \set{(x,((x,\singlel),\singlel)) \mid x \in X}
    \subseteq X \times ((X \times \singleton) \times \singleton)$ .
    Thus, for $(\singleton,\omega)$ to be dualizing the sufficient
    and necessary condition is that
    \begin{align*}
      \alpha(x) & = Q(J_{X})(\alpha)((x,\singlel),\singlel) =
      \iota_{\SX,0}(\iota_{X,0}(\alpha,\omega),\omega)((x,\singlel),\singlel)
      = (\alpha(x) \impl \omega(\singlel)) \impl \omega(\singlel)\,,
    \end{align*}
    for each $X$, $\alpha \in Q(X)$, and $x \in X$.  Letting
    $X = \singleton$, then we see that $\omega$ is a dualizing
    element of $Q$ and, of course, being a dualizing element of $Q$
    is enough for the relation above to hold.
  \end{proof}
  \begin{proposition}
    A nuclear object of $\QSet$ is exactly of the form $(X,\alpha)$
    where the image of $\alpha$ takes values in the group of
    invertible elements elements of $Q$.
  \end{proposition}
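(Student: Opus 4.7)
The plan is to characterise nuclearity by \emph{dualizability}: an object $(X,\alpha)$ of $\QSet$ is nuclear if and only if it admits a dual, i.e., there exist an object $(Y,\beta)$ and morphisms $\eta : I \rto (X,\alpha) \tensor (Y,\beta)$ and $\epsilon : (Y,\beta) \tensor (X,\alpha) \rto I$ satisfying the triangle identities. Since $\Rel$ is compact closed, every set $X$ is its own dual in $\Rel$, with canonical unit $\eta_X = \set{(\singlel, (x,x)) \mid x \in X}$ and canonical counit $\epsilon_X = \set{((x,x), \singlel) \mid x \in X}$. As the projection $\pi : \QSet \rto \Rel$ is strict monoidal and posetal, any dualizing datum for $(X,\alpha)$ projects to a dualizing datum for $X$ in $\Rel$; by uniqueness of duals up to canonical isomorphism in $\Rel$, I may standardise and assume that the candidate dual has underlying set $X$ and that the underlying relations of its unit and counit are $\eta_X$ and $\epsilon_X$.

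Next, I would translate the conditions that $\eta_X$ and $\epsilon_X$ lift to $\QSet$-arrows into pointwise inequalities in $Q$, using the explicit formula for $\mu_{X,Y}$ recalled in this example. Lifting $\eta_X : I = (\singleton,e) \rto (X,\alpha) \tensor (X,\beta)$ amounts to requiring $e \leq \alpha(x) \qmult \beta(x)$ for every $x \in X$, while lifting $\epsilon_X : (X,\beta) \tensor (X,\alpha) \rto I$ amounts to requiring $\beta(x) \qmult \alpha(x) \leq e$ for every $x$. Combined with commutativity of $Q$, these two conditions are equivalent to $\alpha(x) \qmult \beta(x) = e$ for every $x$, which says precisely that each $\alpha(x)$ is invertible in $(Q,e,\qmult)$ with inverse $\beta(x)$. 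This yields the ``only if'' direction.

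For the converse, assume every value of $\alpha$ is invertible and set $\beta(x) \eqdef \alpha(x)^{-1}$; the inequalities above then hold by construction, so $\eta_X$ and $\epsilon_X$ lift to $\QSet$-morphisms. The triangle identities hold in $\Rel$ because $\Rel$ is compact closed; they automatically lift to $\QSet$ since any two parallel $\QSet$-arrows with the same underlying relation are equal, making $\pi$ faithful. Thus $(X,\alpha)$ is dualizable, hence nuclear.

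The main obstacle is the \emph{standardisation step}: justifying that one may restrict attention to a dual whose underlying set is precisely $X$ and whose unit and counit are the canonical relations above. This rests on the uniqueness of duals up to canonical isomorphism in the compact closed category $\Rel$, combined with the strict monoidality of $\pi$, which ensures that a dualizing datum for $(X,\alpha)$ in $\QSet$ descends to a dualizing datum for $X$ in $\Rel$ to which this uniqueness applies. Once this step is in place, both directions reduce to the elementary algebraic calculation in $Q$ sketched above.
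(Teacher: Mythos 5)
Your proof is correct, but it takes a genuinely different route from the paper's. The paper works directly with the defining canonical morphism $\SX \tensor X \rto X \impl X$: in $\Rel$ this arrow is the identity of $X \times X$, so by Lemma~\ref{lemma:invertible} it lifts to an isomorphism of $\QSet$ exactly when the two structures coincide, i.e. when $(\alpha(x) \impl e)\qmult\alpha(y) = \alpha(x)\impl\alpha(y)$ for all $x,y$; specialising to $x=y$ and using $e \leq \alpha\impl\alpha$ together with $(\alpha\impl e)\qmult\alpha \leq e$ gives invertibility, and conversely. You instead pass through dualizability (unit/counit data and triangle identities), which costs you two extra ingredients the paper's argument sidesteps: the classical equivalence between nuclearity and dualizability in a symmetric monoidal closed category (you assert it as your starting point without proof or citation -- it is a standard theorem, but since the paper's notion of nuclear is precisely ``the canonical arrow is invertible'', this equivalence is doing real work), and the standardisation of the dual, which you correctly identify as the delicate step and which does go through, since any isomorphism of $\Rel$ lifts to an isomorphism of $\QSet$ by transporting the structure along $Q(\phi)$, so the canonical comparison of duals in $\Rel$ can be lifted. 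What your route buys in exchange is a cleaner terminal computation -- the two lifting conditions collapse to $\alpha(x)\qmult\beta(x) = e$, which exhibits the inverse $\beta = \alpha^{-1}$ explicitly -- whereas the paper's route is shorter and requires no transport-of-structure argument, at the price of the slightly less transparent quantale identity $(\alpha\impl e)\qmult\alpha = \alpha\impl\alpha$.
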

  \begin{proof}
    In \Rel, the canonical arrow $X^{\ast} \tensor X \rto X \impl X$
    is simply the identity of $X \times X$. Thus, to have this
    isomorphism to lift to $\QSet$, we need to ask the equality
    \begin{align*}
      (\alpha(x) \impl 1) \cdot \alpha(y) & = \alpha(x) \impl
      \alpha(y)\,,
    \end{align*}
    in particular, when $x = y$.  It is easily verified that
    $ (\alpha \impl 1)\cdot \alpha = \alpha \impl \alpha$ holds if
    and only if $\alpha$ is an invertible element of $Q$ and that,
    when $\alpha$ have this property,
    $(\alpha \impl 1) \cdot \beta = \alpha \impl \beta$ holds for
    each $\beta \in Q$.
  \end{proof}
\end{commentout}

In \cite{SchalkDePaiva2004} this construction was also generalized as
follows. Let $F : \Rel \rto \Rel$ be a 2-functor (equivalently, the
relation lifiting of weak-pullback preserving endofunctor of $\Set$,
see \cite{KurzVelebil2016}) that is comonoidal and whose natural
arrows $\nu_{X,Y} : F(X \times Y) \rto F(X) \times F(Y)$ are
functional.  An object of $\QFSet$ is a pair $(X,\alpha)$ with
$\alpha \in Q(F(X)) ( =Q^{F(X)})$. An arrow in $\QSet$ from
$(X,\alpha)$ to $(Y,\beta)$ is a relation $R : X \rto Y$ such that,
for each $x \in F(X)$ and $y \in F(Y)$, if $x F(R) y$, then
$\alpha(x) \leq \beta(y)$.
Let $\conv : \Rel \rto \Rel^{op}$ be the converse functor, identity on
objects and picking the converse of a given relation. Given $F$ as
above, $\nu_{X,Y}^{\conv} : F(X) \times F(Y) \rto F(X \times Y) $
makes $F$ into a monoidal functor.  Recall next that if
$F : \Rel \rto \Rel$ and $Q : \Rel \rto \Pos$ are monoidal, then their
composition $Q \circ F$ is monoidal as well, with natural
transformation
\begin{align*}
  \mu^{Q \circ F}_{X,Y} &  = Q(\mu_{X,Y}^{F}) \circ \mu^{Q}_{F(X),F(Y)}
  : Q(F(X)) \times Q(F(Y)) 
  \rto Q(F(X \times Y))\,.
\end{align*}
Clealry $Q \circ F$ monoidally factors through $\SLatt$ and the
equality $\QFSet = \int Q \circ F$ holds. The monoidal structure
described in \cite{SchalkDePaiva2004} is the one arising from the
monoidal $\mu^{Q \circ F}$.

\textEnd[category=examples]{
If $Q(X) = Q^{X}$, then we have a monoidal
closed structure on $\int Q \circ F$, which can briefly described as
follows:
\begin{align}
  \nonumber
  (X,\alpha) \tensor (Y,\beta) & = (X \tensor Y, \gamma) \,, \qquad\gamma
  \eqdef Q(\mu^{F}_{X,Y})( \mu^{Q}_{FX,FY}(\alpha,\beta)) \in
  Q(F(X \times Y))\,,
  \intertext{or, more explicitly, for
    $z \in F(X \times Y)$,} 
  \gamma(z) & = \bigvee
  \set{\alpha(x)\beta(y) \mid (x,y) \mu^{F}_{X,Y} z}\,.
  \label{eq:generalQF}
\end{align}
In \cite{SchalkDePaiva2004} the category \QSet was generalized as
follows. Let $F : \Rel \rto \Rel$ be a 2-functor (that is, the
relation lifiting of weak-pullback preserving endofunctor) which is
comonoidal with natural transformations
$\nu_{X,Y} : F(X \times Y) \rto F(X) \times F(Y)$.  An object of
$\QFSet$ is a pair $(X,\alpha)$ with
$\alpha \in Q(F(X)) ( =Q^{F(X)})$. An arrow in $\QSet$ from
$(X,\alpha)$ to $(Y,\beta)$ is a relation $R : X \rto Y$ such that,
for each $x \in F(X)$ and $y \in F(Y)$ such that $x F(R) y$,
$\alpha(x) \leq \beta(y)$. That is, we have the identity
$\QFSet = \int Q \circ F$.
}
\begin{theoremEnd}[category=examples, all end]{lemma}
  \label{lemma:PFsets}
  For an endofunctor $F$ of $\Rel$ with the above properties,
  $\nu_{X,Y}^{\conv} : F(X) \times F(Y) \rto F(X \times Y) $ makes
  $F$ into a monoidal functor.
\end{theoremEnd}
\begin{proofEnd}
  Since $F$ is a 2-functor, then it preserves left and right
  adjoints in $\Rel$, and then it commutes with the converse
  functor. Indeed, writing $R : X \rto Y$ as the composition
  $X \rto[\pi_{X}^{\conv}] R \rto[\pi_{Y}] Y$, we have
  $R^{\conv} = \pi_{Y}^{\conv} \cdot \pi_{X}$. Then
  $F(R^{\conv}) = F(\pi_{Y}^{\conv}) \cdot F(\pi_{X}) =
  F(\pi_{Y})^{\conv} \cdot F(\pi_{X}^{\conv})^{\conv} =
  (F(\pi_{X}^{\conv} \cdot \pi_{Y}))^{\conv} = F(R^{\conv})$.  Then
  commutativity of the diagram on the left immediately follows from
  commutativity of the one on the right:
  \begin{align*}
    \begin{tikzcd}[ampersand replacement=\&]
      F(X' \times Y') \& \ar[swap]{l}{\nu_{X',Y'}^{\conv}} F(X') \times F(Y') \\
      F(X \times Y) \ar{u}{F(R \times S)}\& \ar[swap]{l}{\nu_{X,Y}^{\conv}}
      \ar[swap]{u}{F(R) \times F(S)}F(X) \times F(Y) 
    \end{tikzcd}
    \qquad
    \begin{tikzcd}[ampersand replacement=\&]
      F(X' \times Y') \ar{r}{\nu_{X',Y'}} \ar[swap]{d}{F(R^{\conv} \times
        S^{\conv})} \&  F(X') \times F(Y')
      \ar{d}{F(R^{\conv}) \times F(S^{\conv})}
      \\
      F(X \times Y) \ar{r}{\nu_{X,Y}} \& 
      F(X) \times F(Y)  
      \end{tikzcd}
    \end{align*}
  \end{proofEnd}
  \textEnd[category=examples]{
  It is required in \cite{SchalkDePaiva2004} that the relations
  $\nu_{X,Y}$ are total functions.  Notice that if $R$ is the converse of a total
  function $f$, then
  $Q(R)(\alpha)(y) =  \bigvee_{x R y} \alpha(x) = \bigvee_{f(y) = x} \alpha(x) =
  \alpha(f(y))$.
  Thus, if $\mu_{X,Y} = \nu_{X,Y}^{\conv}$ with a functional
  $\nu_{X,Y} = \langle\nu_{1},\nu_{2}\rangle : F(X \times Y) \rto F(X)
  \times F(Y)$, then the general formula in \eqref{eq:generalQF}
  applied to $Q \circ F$ yields
  \begin{align*}
    (X,\alpha) \tensor (Y,\beta) & = (X \tensor Y, \gamma)
    \qquad\text{with}\qquad \gamma(z) =
    Q(\nu_{X,Y}^{c}(z))(\mu^{Q}_{X,Y}(\alpha,\beta))(z) =
    \alpha(\nu_{1}(z))\beta(\nu_{2}(z)) \,.
  \end{align*}
  That is, the tensor product of $\QFSet$ described in
  \cite{SchalkDePaiva2004} is the canonical one of $\int Q \circ F$.
}

\medskip

A wide generalization of the category $\QSet$ arises if we replace
$\Rel$ with $\QRel$, the category of relations or matrices over $Q$.
An object of the category of $\QRel$
is a set, while an arrow $X \rto Y$ is a map
$\psi : X \times Y \rto Q$. Identity is the Kronecker $\delta$, while
composition of $\psi: X \rto Y$ and $\chi : Y \rto Z$ is defined as
follows:
\begin{align*}
  (\psi \cdot \chi)(x,z) & \eqdef \bigvee_{y \in Y} \psi(x,y)\qmult
  \chi(y,z)\,.
\end{align*}
The category $\QRel$ is compact closed
\cite{KellyLaplaza1980,Rosenthal1994b}, with
$X \tensor Y = X \impl Y = X \times Y$, $I = 0 = \singleton$, and
$\SX = X$ on objects.  $\QRel$ is also a quantaloid, that is, a
category enriched over $\SLatt$. From this it follows that
$\QRel(\singleton,\funNB)$ is a monoidal functor from $\QRel$ to
$\SLatt$. Thus, $\int \QRel(\singleton,\fun)$ is monoidal
closed. Since formulas as in \eqref{eq:structureRel} still hold in
this wider setting, then dualizing objects of
$\int \QRel(\singleton,\funNB)$ correspond to dualizing elements in
$\QRel(\singleton,\singleton) \simeq Q$.

Recall that a fuzzy set is a map $\alpha : X \rto\relax [0,1]$.  The
unit interval $[0,1]$ has several quantale structures, for example, it
is a complete Heyting algebra. A \emph{norm} on $[0,1]$ is indeed a
continuous multiplication making $[0,1]$ into a quantale. Fuzzy sets
have been organised in several categories of fuzzy relations, see e.g.
\cite{Winter2007,HWW2014,Mockor2015}.  For $Q = [0,1]$ (with a norm),
the category $\int \QRel(\singleton,\funNB)$ accommodates most of
these constructions.

\end{Example}

\begin{Example}[Non-uniform totality spaces]
	\label{exp:nuts}
  Let us recall some definitions from \cite[\S 1.2.2]{Farzad2023}.
  A \emph{totality space} is a pair $(X,\alpha)$ with $\alpha$ is an
  upward closed (w.r.t. in containement) collection of susbets $X$.
  For a relation $R \subseteq X \times Y$ and $A \subseteq X$, let us
  define
  $R^{\sharp}(A) \eqdef \set{y \in Y \mid \exists x \in A \Tst xRy}$.
  A \emph{morphism of totality spaces} from $(X,\alpha)$ to $(Y,\beta)$ is a
  relation $R \subseteq X \times Y$ such that, for each
  $A \in \alpha$, $R^{\sharp}(A) \in \beta$, see  \cite[Lemma 33]{Farzad2023}.

  Let $\UP(X)$ be the set of upsets of $\P(X)$ (so, in particular,
  $\UP(X)$ is a complete lattice). We extend $\UP$ to a functor from
  $\Rel$ to $\SLatt$ as follows: for $R \subseteq X \times Y$ and
  $\alpha \in \UP(X)$, we let
  \begin{align}
    \label{eq:defFunctor}
    \UP(R)(\alpha) & \eqdef \;\;\uparrow \exists_{R^{\sharp}}(\alpha)\,,
  \end{align}
  that is, $\UP(R)$ is the upward closure of the direct image of
  $R^{\sharp}$.  Notice that
  \begin{align}
    \label{eq:UPRleftadj}
    \uparrow \exists_{R^{\sharp}}(\alpha) \subseteq \beta & \Tiff
    \alpha \subseteq (R^{\sharp})^{-1}(\beta)\,,
  \end{align}
  so $\UP(R)$, having a right adjoint, is \sp.  
  Let us consider the category $\int \UP$. It is easily seen that the
  equivalent conditions in \eqref{eq:UPRleftadj} amount to the
  statement that $R^{\sharp}(A) \in \beta$, for each $A \in
  \alpha$. That is, $\Nuts$ arises as the total category of the
  functor $\UP$.  
  The monoidal structure of $\Nuts = \int \UP$ is obtained by means of
  \begin{align*}
    \set{\singleton} & \in  \UP(\singleton)\, \Tand \mu_{X,Y}(\alpha,\beta) = \,\uparrow \set{a \times b \mid a \in\alpha, \,b \in \beta }\,,
  \end{align*}
  where the latter is natural and bilinear. 

  \begin{theoremEnd}[category=examples, all end]{lemma}
    The following holds:
    \begin{align*}
      \iota_{Y,Z}(\beta,\gamma) & = \set{R \subseteq Y \times Z \mid
        R^{\sharp}(B) \in \gamma, \text{ for all $B \in \beta$ }}\,.
    \end{align*}
    In particular, $I = (\singleton,\set{\singleton})$ is a dualizing
    element of \Nuts.
  \end{theoremEnd}
  \begin{proofEnd}
    Notice that, for $B \subseteq Y$ and $R \subseteq Y \times Z$.
    \begin{align*}
      \ev^{\sharp}(B \times R) & = \set{c \in Z \mid \exists b,b',c'
        \Tst b \in B, (b',c') \in R, (b,b',c') \ev c} \\
      & = \set{c \in Z \mid \exists b,b',c'
        \Tst b \in B, (b',c') \in R, b = b' \Tand c' = c} \\
      & = \set{c \in Z \mid \exists b
        \Tst b \in B \Tand bRc} = R^{\sharp}(B)\,.
    \end{align*}
    Next:
    \begin{align*}
      \exists_{\ev^{\sharp}}(\upset \set{B \times R \mid B \in \beta})
      \subseteq \gamma 
      \Tiff & \forall B \in \beta \;\; \ev^{\sharp}(B \times R) \in \gamma \\
      \Tiff & \forall B \in \beta \;\; R^{\sharp}(B) \in \gamma \,.
    \end{align*}
    Therefore:
    \begin{align*}
      R \in \iota_{Y,Z}(\beta,\gamma) & = \bigcup \set{\alpha \in
        \UP(Y\times Z) \mid \UP(\ev)(\mu(\beta,\alpha)) \subseteq
        \gamma)}  \\
      & \Tiff \UP(\ev)(\mu(\beta,\uparrow R)) \subseteq \gamma \\
      & \Tiff R^\sharp(B) \in \gamma, \forall B \in \beta\,.
    \end{align*}
    Let now $\omega = \set{\singleton}$,  then
    \begin{align*}
      \iota_{X,\singleton}(\alpha,\omega) & = \set{R \subseteq X
        \times \singleton \mid R^{\sharp}(A) = \singleton, \;\forall A
        \in \alpha}\\
      & \simeq \set{A' \subseteq X
        \mid A' \cap A \neq \emptyset, \;\forall A
        \in \alpha} \defeq \alpha^{\perp}\,.
    \end{align*}
    Thus, setting $(X,\alpha)^{\perp} \eqdef (X,\alpha^{\perp})$ and
    considering that $\alpha^{\perp}{}^{\perp} = \alpha$, then we see
    that $(X,\alpha) \impl (\singleton,\omega)$ is canonically
    isomorphic to $(X,\alpha)^{\perp}$, which is a duality of \Nuts,
    we deduce that $(\singleton,\omega)$ is dualizing.
  \end{proofEnd}
\end{Example}

\begin{example}[Focused orthogonality categories]
  \label{example:focs}
  The category \Nuts is an example of \saut category $\OCorth$ arising
  from a focused orthogonality structure \cite{HS2003}. It was
  recently recognised \cite{Fiore2023} that these categories arise as
  total categories of some functor. Let us reconsider these categories
  with the help of the theory we have developed.

  For a monoidal category $\C$, the hom functor
  $\C(I,\funNB) : \C \rto \Set$ is monoidal, since we can associate to
  $x : I \rto X$ and $y : I \rto Y$, the arrow
  $x \bullet y \eqdef I \rto[\simeq] I \tensor I \rto[x \tensor y] X
  \tensor Y$. The powerset functor $P : \Set \rto \SLatt$ is also
  monoidal. By composing these functors, we obtain the functor
  $P(\C(I,\funNB)) : \C \rto \SLatt$ whose  monoidal
  structure is given by
  \begin{align*}
    \mu_{X,Y}(\alpha,\beta) = \set{x\bullet y \mid x \in \alpha, \,y
      \in \beta}\,. 
  \end{align*}
  Next, if $\C$ is \saut, then there is a natural isomorphism
  $\sharp : \C(X,0) \rto \C(I,\SX)$ with the property that, for
  $x : I \rto X$ and $y : X \rto 0$,
  $y \circ x = \ev_{X,0} \circ (x \bullet y^{\sharp})$. Let now
  $\independent \subseteq \C(I,0)$. Using the notation of
  \cite{Fiore2023}, for $\alpha \subseteq \C(I,X)$, we have
  \begin{align*}
    P(\sharp)(\alpha^{\perp}) & = \set{y^{\sharp}
      \mid y \in \C(0,X),\, y \circ
        x \in \independent,
        \text{ for all } x \in \alpha} \\
      &
     = \set{y \in \C(I,\SX) \mid \ev_{X,0} \circ (x \bullet y)\in \independent,
       \text{ for all } x \in \alpha} \\
     & = \bigcup \set{ y \in \C(I,\SX) \mid P(\C(I,\ev_{X,0}))(\mu_{X,\SX}(\alpha,\set{y}))
       \subseteq \independent} \\
     & = \bigcup \set{ \beta \subseteq \C(I,\SX) \mid
       P(\C(I,\ev_{X,0}))(\mu_{X,\SX}(\alpha,\beta))
       \subseteq \independent} =
     \iota_{X,0}(\alpha,\independent)\,.
   \end{align*}
   That is, the Galois connection described in \cite{HS2003,Fiore2023}
   is, up to the isomorphism $P(\sharp) : P(C(X,0)) \simeq P(\C(I,\SX))$,
   the one we describe in Section~\ref{sec:doubleNegation}. The
   category $\OCorth$ is, therefore, a special case of the given
   construction $\int \Qj$, where $Q$ has been instantiated with
   $P(C(I,\funNB))$ and $\omega = \independent \in P(C(I,0))$.
\end{example}

\section{Conclusions}
\label{sec:conclusions}

We have given exact conditions that allow to lift a functor from $\C$
to $\int Q$. In particular, lifting the monoidal structure of $\C$ is
almost
equivalent to $Q$ being a monoidal functor, as naturality of the
arrows $\mu_{X,Y} : Q(X) \times Q(Y) \rto Q(X \tesnor Y)$ is replaced
by lax naturality.
Similarly, lifting the internal hom to $\int Q$ was argued to be
equivalent to giving a lax extranatural transformation with components
$\iota_{X,Y} : Q(X)^{\op} \times Q(Y) \rto Q(X \impl Y)$ making the
unit and counit of the adjunction in $\C$ maps in $\iQ$. These
conditions were found to be equivalent to the fact that the maps
$\brack{\alpha,\funNB}_{X,Y}$ have a right adjoint and that $\mu$ may
be defined in terms of these maps. 
We argued that these conditions are automatically satisfied
when $Q$ is actually a monoidal functor into $\SLatt$---thus, in this case, $\iQ$ 
is always \SMC.
We investigated then dualizing objects in $\iQ$. We proved that, when
$Q$ is monoidal, a necessary and sufficient condition for $(0,\omega)$
to be dualizing is that $\C$ is \saut and that the maps
$\iota_{X,0}(-,\omega)$ are invertible.  
One of our most fascinating result was to construct a functor $\Qj$ in
an analogous way of the double negation construction in quantale
theory. This gives another total category $\int \Qj$ where
$(0,\omega)$ is a dualizing object. Moreover, we gave a representation
theorem for monoidal functors $Q$ into $\SLatt$ such that $\iQ$ is
\saut, thus giving a sort of phase semantics of proofs, and a
completeness theorem for this semantics.
We ended by studying lifting of
intial algebras and \final coalgebras, that is, of fixed points of
endofunctors. We gave a useful description of categories of algebras
and coalgebras in terms of other total categories. In particular, we
argued that we can always lift initial algebras and \final coalgebras
when $Q$ is a functor having $\SLatt$ as codomain.

\medskip
We focused in this paper on lifting structures---the \SMC one, being
star-autonomous, initial and terminal coalgebras of functors---that
arise in the semantics of proofs for fragments of linear logic
augmented with least and greatest fixed points. The categorical
machinery developed in Section~\ref{sec:liftingFunctors} and
exemplified all along this paper can be applied for lifting different
kind of categorical structures.  For example, one might wish to lift
products from $\C$. It is then verified that a product is lifted from
$\C$ to $\iQ$ via $\mu$ if and only if
$\mu_{X,Y} : Q(X) \times Q(Y) \rto Q(X \times Y)$ is right adjoint to
the canonical map $Q(X \times Y) \rto Q(X) \times Q(Y)$. This
observation is implicit in the literature, see for example
\cite{Santocanale18,Abramsky2019,EvangelouOostBH23}.  Notice that, in
principle, the methodology applies for all kind of categorical
structures and, due to its generality, can be instantiated to several
concrete categories. Categories of generalized metric spaces, which
can be seen as fibrations over $\Set$, are an example. The problem of
lifting set theoretic functors to these categories is also
investigated in \cite{BKV2019}.

The research developed here also raises many natural mathematical
questions that we aim to answer in a close future. For example, it can
be shown that if $\C$ is \saut with the tensor unit $I$ being
dualizing, and if $(I,\omega)$ is a dualizing object of $\iQ$, then
$\omega$ is a dualizing element of the quantale $Q(I)$. Whether the
converse hold, we do not know yet.  We expect the representation
Theorem~\ref{thm:representation} to yield hints on this problem.
Last
but not least, we have emphasized throughout this paper the analogy of
a monoidal functor $Q : \C \rto \SLatt$ with a quantale. Such a
monoidal functor is naturally enriched over $\SLatt$, and the
convolution product makes monoidal the category $\homm{\C}{\SLatt}$ of
functors enriched over $\SLatt$. 
It is not difficult to observe that if $\C$ is \saut, then so is
$\homm{\C}{\SLatt}$. Monoidal functors into $\SLatt$ are actually
monoids in $\homm{\C}{\SLatt}$. We studied in
\cite{delacroix_et_al:LIPIcs.CSL.2023.18} Frobenius monoids in
arbitrary \sautcats. A natural conjecture is that the structure
described in Sections~\ref{sec:staraut} and \ref{sec:doubleNegation},
making $\iQ$ into a \sautcat, is the one of a Frobenius monoid in
$\homm{\C}{\SLatt}$.

\medskip\noindent
\textbf{Acknowledgment.} 
The authors are thankful to Marcelo Fiore for numerous hints and
pointers to the existing literature.

\bibliographystyle{abbrv}
\bibliography{biblio}

\begin{thebibliography}{10}

\bibitem{Abramsky2019}
S.~Abramsky and G.~Car{\`u}.
\newblock Non-locality, contextuality and valuation algebras: a general theory
  of disagreement.
\newblock {\em Philos. Trans. R. Soc. Lond., A, Math. Phys. Eng. Sci.},
  377(2157):22, 2019.
\newblock Id/No 20190036.

\bibitem{Bacci2023}
G.~Bacci, R.~Mardare, P.~Panangaden, and G.~Plotkin.
\newblock Propositional logics for the {L}awvere quantale.
\newblock Available at \url{https://arxiv.org/abs/2302.01224}, 2023.

\bibitem{BKV2019}
A.~Balan, A.~Kurz, and J.~Velebil.
\newblock Extending set functors to generalised metric spaces.
\newblock {\em Log. Methods Comput. Sci.}, 15(1):57, 2019.
\newblock Id/No 5.

\bibitem{Barr1979}
M.~Barr.
\newblock {\em *-autonomous categories. {With} an appendix by {Po}-{Hsiang}
  {Chu}}, volume 752 of {\em Lect. Notes Math.}
\newblock Springer, Cham, 1979.

\bibitem{DP}
B.~A. Davey and H.~A. Priestley.
\newblock {\em Introduction to lattices and order}.
\newblock Cambridge University Press, Cambridge, 1990.

\bibitem{AdJS2022}
A.~De, F.~Jafarrahmani, and A.~Saurin.
\newblock Phase semantics for linear logic with least and greatest fixed
  points.
\newblock In A.~Dawar and V.~Guruswami, editors, {\em 42nd {IARCS} Annual
  Conference on Foundations of Software Technology and Theoretical Computer
  Science, {FSTTCS} 2022, December 18-20, 2022, {IIT} Madras, Chennai, India},
  volume 250 of {\em LIPIcs}, pages 35:1--35:23. Schloss Dagstuhl -
  Leibniz-Zentrum f{\"{u}}r Informatik, 2022.

\bibitem{delacroix_et_al:LIPIcs.CSL.2023.18}
C.~de~Lacroix and L.~Santocanale.
\newblock {Frobenius Structures in Star-Autonomous Categories}.
\newblock In B.~Klin and E.~Pimentel, editors, {\em 31st EACSL Annual
  Conference on Computer Science Logic (CSL 2023)}, volume 252 of {\em Leibniz
  International Proceedings in Informatics (LIPIcs)}, pages 18:1--18:20,
  Dagstuhl, Germany, 2023. Schloss Dagstuhl -- Leibniz-Zentrum f{\"u}r
  Informatik.

\bibitem{EJ2021}
T.~Ehrhard and F.~Jafarrahmani.
\newblock Categorical models of linear logic with fixed points of formulas.
\newblock In {\em 36th Annual {ACM/IEEE} Symposium on Logic in Computer
  Science, {LICS} 2021, Rome, Italy, June 29 - July 2, 2021}, pages 1--13.
  {IEEE}, 2021.

\bibitem{EGGHK}
P.~Eklund, J.~Garc{\'\i}a, U.~H{\"o}hle, and J.~Kortelainen.
\newblock {\em Semigroups in Complete Lattices: Quantales, Modules and Related
  Topics}.
\newblock Developments in Mathematics. Springer International Publishing, 2018.

\bibitem{EvangelouOostBH23}
N.~Evangelou{-}Oost, C.~Bannister, and I.~J. Hayes.
\newblock Contextuality in distributed systems.
\newblock In R.~Gl{\"{u}}ck, L.~Santocanale, and M.~Winter, editors, {\em
  Relational and Algebraic Methods in Computer Science - 20th International
  Conference, RAMiCS 2023, Augsburg, Germany, April 3-6, 2023, Proceedings},
  volume 13896 of {\em Lecture Notes in Computer Science}, pages 52--68.
  Springer, 2023.

\bibitem{Fiore2023}
M.~Fiore, Z.~Galal, and F.~Jafarrahmani.
\newblock Fixpoint constructions in focused orthogonality models of linear
  logic.
\newblock To appear in the proceeedings of the conference MFPS 2023, Electronic
  Notes in Theoretical Informatics and Computer Science, 2023.

\bibitem{FSCSAL13}
J.~Fortier and L.~Santocanale.
\newblock Cuts for circular proofs: semantics and cut-elimination.
\newblock In S.~R.~D. Rocca, editor, {\em Computer Science Logic 2013 {(CSL}
  2013), {CSL} 2013, September 2-5, 2013, Torino, Italy}, volume~23 of {\em
  LIPIcs}, pages 248--262. Schloss Dagstuhl - Leibniz-Zentrum f{\"{u}}r
  Informatik, 2013.

\bibitem{HWW2014}
J.~Harding, C.~Walker, and E.~Walker.
\newblock Categories with fuzzy sets and relations.
\newblock {\em Fuzzy Sets Syst.}, 256:149--165, 2014.

\bibitem{HS2003}
M.~Hyland and A.~Schalk.
\newblock Glueing and orthogonality for models of linear logic.
\newblock {\em Theor. Comput. Sci.}, 294(1-2):183--231, 2003.

\bibitem{Farzad2023}
F.~Jafarrahmani.
\newblock {\em {Fixpoints of Types in Linear Logic from a Curry-Howard-Lambek
  Perspective}}.
\newblock PhD thesis, Université Paris Cité, 2023.

\bibitem{KellyLaplaza1980}
G.~M. Kelly and M.~L. Laplaza.
\newblock Coherence for compact closed categories.
\newblock {\em J. Pure Appl. Algebra}, 19:193--213, 1980.

\bibitem{KurzVelebil2016}
A.~Kurz and J.~Velebil.
\newblock Relation lifting, a survey.
\newblock {\em J. Log. Algebr. Methods Program.}, 85(4):475--499, 2016.

\bibitem{Lambek68}
J.~Lambek.
\newblock A fixpoint theorem for complete categories.
\newblock {\em Math. Z.}, 103:151--161, 1968.

\bibitem{MacLane}
S.~Mac~Lane.
\newblock {\em Categories for the working mathematician.}, volume~5 of {\em
  Grad. Texts Math.}
\newblock New York, NY: Springer, 2nd ed edition, 1998.

\bibitem{Mockor2015}
J.~Mo{\v{c}}ko{\v{r}}.
\newblock Categories of {{\(q\)}}-sets and fuzzy logic models.
\newblock {\em Adv. Fuzzy Sets Syst.}, 20(1):47--78, 2015.

\bibitem{MV2020}
J.~Moeller and C.~Vasilakopoulou.
\newblock Monoidal {Grothendieck} construction.
\newblock {\em Theory Appl. Categ.}, 35:1159--1207, 2020.

\bibitem{Rosenthal1990a}
K.~I. Rosenthal.
\newblock A note on {Girard} quantales.
\newblock {\em Cah. Topologie G{\'e}om. Diff{\'e}r. Cat{\'e}goriques},
  31(1):3--11, 1990.

\bibitem{Rosenthal1990b}
K.~I. Rosenthal.
\newblock {\em Quantales and their applications}, volume 234 of {\em Pitman
  Res. Notes Math. Ser.}
\newblock Harlow: Longman Scientific \& Technical; New York: John Wiley \&
  Sons, Inc., 1990.

\bibitem{Rosenthal1994b}
K.~I. Rosenthal.
\newblock {{\(*\)}}-autonomous categories of bimodules.
\newblock {\em J. Pure Appl. Algebra}, 97(2):189--202, 1994.

\bibitem{Santocanale02b}
L.~Santocanale.
\newblock {\(\mathrm{\mu}\)}-bicomplete categories and parity games.
\newblock {\em {RAIRO} Theor. Informatics Appl.}, 36(2):195--227, 2002.

\bibitem{Santocanale02a}
L.~Santocanale.
\newblock A calculus of circular proofs and its categorical semantics.
\newblock In M.~Nielsen and U.~Engberg, editors, {\em Foundations of Software
  Science and Computation Structures, 5th International Conference, {FOSSACS}
  2002. Held as Part of the Joint European Conferences on Theory and Practice
  of Software, {ETAPS} 2002 Grenoble, France, April 8-12, 2002, Proceedings},
  volume 2303 of {\em Lecture Notes in Computer Science}, pages 357--371.
  Springer, 2002.

\bibitem{Santocanale2002}
L.~Santocanale.
\newblock Free {{\(\mu\)}}-lattices.
\newblock {\em J. Pure Appl. Algebra}, 168(2-3):227--264, 2002.

\bibitem{Santocanale18}
L.~Santocanale.
\newblock The equational theory of the natural join and inner union is
  decidable.
\newblock In C.~Baier and U.~D. Lago, editors, {\em Foundations of Software
  Science and Computation Structures - 21st International Conference, {FOSSACS}
  2018, Held as Part of the European Joint Conferences on Theory and Practice
  of Software, {ETAPS} 2018, Thessaloniki, Greece, April 14-20, 2018,
  Proceedings}, volume 10803 of {\em Lecture Notes in Computer Science}, pages
  494--510. Springer, 2018.

\bibitem{SchalkDePaiva2004}
A.~Schalk and V.~de~Paiva.
\newblock Poset-valued sets or how to build models for linear logics.
\newblock {\em Theor. Comput. Sci.}, 315(1):83--107, 2004.

\bibitem{Shulman2008}
M.~Shulman.
\newblock Framed bicategories and monoidal fibrations.
\newblock {\em Theory Appl. Categ.}, 20:650--738, 2008.

\bibitem{Winter2007}
M.~Winter.
\newblock {\em Goguen categories. {A} categorical approach to {{\(L\)}}-fuzzy
  relations}, volume~25 of {\em Trends Log. Stud. Log. Libr.}
\newblock Berlin: Springer, 2007.

\end{thebibliography}

\ifcsl

\newpage
\appendix

\renewcommand{\sectionautorefname}{Section}
\newcommand{\proofsOfSection}[1]{
  \section*{Proofs of statements from \autoref{sec:#1}, \nameref{sec:#1}}
  \printProofs[#1]
  \newpage
}

\newcommand{\additionalOfSection}[1]{
  \section*{Additional material for \autoref{sec:#1}, \nameref{sec:#1}}
  \printProofs[#1]
  \newpage
}

 \proofsOfSection{liftingFunctors}
 \proofsOfSection{closed}
 \proofsOfSection{staraut}
 \proofsOfSection{doubleNegation}
 \proofsOfSection{liftingfixedpoints}
 \additionalOfSection{examples}

\fi

\end{document}